\newtheorem{observation}{Observation}
\newtheorem{lemma}{Lemma}
\newtheorem{theorem}[lemma]{Theorem}
\newtheorem{definition}[lemma]{Definition}
\newtheorem{assumption}[lemma]{Assumption}
\newtheorem{corollary}[lemma]{Corollary}
\newtheorem{question}[lemma]{Question}
\newcommand{\eps}{\epsilon}
\newcommand{\E}{\mathbb{E}}
\newcommand{\A}{\mathbb{A}}
\newcommand{\cQ}{\mathcal{Q}}
\newcommand{\cP}{\mathcal{P}}
\DeclareMathOperator{\argmax}{argmax}
\newcommand{\ind}{\mathbf{1}}
\newcommand{\cT}{\mathcal{T}}
\newcommand{\C}{\mathcal{C}}
\newcommand{\cH}{\mathcal{H}}
\newcommand{\cG}{\mathcal{G}}
\newcommand{\N}{\mathbb{N}}
\newcommand{\OPT}{\mathrm{OPT}}
\newcommand{\VC}{\mathrm{VC}}
\newcommand{\Ldim}{\mathrm{Ldim}}
\newcommand{\SAT}{\mathrm{SAT}}
\definecolor{Gred}{RGB}{219, 50, 54}
\definecolor{Ggreen}{RGB}{60, 186, 84}
\definecolor{Gblue}{RGB}{72, 133, 237}
\definecolor{Gyellow}{RGB}{247, 178, 16}
\definecolor{ToCgreen}{RGB}{0, 128, 0}
\definecolor{myGold}{RGB}{231,141,20}
\definecolor{myBlue}{rgb}{0.19,0.41,.65}
\definecolor{myPurple}{RGB}{175,0,124}
\providecommand{\Comments}{3}
\newcommand{\mytodo}[1]{\ifnum\Comments=1{#1}\fi}
\newcommand{\tableoftodos}{\ifnum\Comments=1 \listoftodos[Comments/To Do's] \fi}
\title{Improved Inapproximability of VC Dimension and \\ Littlestone's Dimension via (Unbalanced) Biclique}
\author{
\makebox[.3\linewidth]
{Pasin Manurangsi}\\
Google Research, Thailand\\
\texttt{pasin@google.com}
}
\date{\today}
\begin{document}

\maketitle

\begin{abstract}
We study the complexity of computing (and approximating) VC Dimension and Littlestone's Dimension when we are given the concept class explicitly. 
We give a simple reduction from Maximum (Unbalanced) Biclique problem to approximating  VC Dimension and Littlestone's Dimension.
With this connection, we derive a range of hardness of approximation results and running time lower bounds. For example, under the (randomized) Gap-Exponential Time Hypothesis or the Strongish Planted Clique Hypothesis, we show a tight inapproximability result: both dimensions are hard to approximate to within a factor of $o(\log n)$ in polynomial-time. These improve upon constant-factor inapproximability results from~\cite{ManurangsiR17}.
\end{abstract}

\section{Introduction}

VC Dimension~\cite{VC} and Littlestone's Dimension~\cite{Littlestone87} are two of the most fundamental quantities in learning theory; the former governs the sample complexity in PAC model to within a constant factor~\cite{BlumerEHW89,Hanneke16}, while the latter governs the sample complexity in online learning~\cite{Littlestone87}. Due to this, it would be extremely useful to have an efficient algorithm for computing or approximating these dimensions for any given concept class $\C \subseteq 2^X$.

Given the importance of the two quantities, it should come as no surprise that this question has been investigated for several decades. Two models have been considered, based on whether the concept class is given explicitly. In the first ``implicit'' model, the input is a circuit which when given (indices of) $C \in \C$ and $x \in X$, evaluates to $C(x) \in \{0, 1\}$. For this model, both problems are hard: Schaefer proved that computing VC Dimension is $\Sigma_p^3$-complete~\cite{Schaefer99} whereas Littlestone's Dimension is PSPACE-complete~\cite{Schaefer00}. Furthermore, Mossel and Umans~\cite{MosselU02} showed that VC Dimension is $\Sigma_p^3$-hard even to approximate to within a factor less than 2, and that it is AM-hard to approximate to within $n^{1 - \eps}$-factor for any constant $\eps > 0$.

The second model--which is the focus of the remainder of this work--is the ``explicit'' model, where the concept class is given as an $|X| \times |\C|$-matrix. It is not hard to see that straightforward algorithms can solve both problems in $n^{O(\log n)}$-time, because both dimensions are at most $O(\log n)$. Therefore, these problems cannot be NP-hard (unless NP is contained in $DTIME(n^{O(\log n)}$)). Papadimitriou and Yannakakis~\cite{PapadimitriouY96} defined a class similar to NP but with ``limited non-determinism'' called LOGNP, and showed that VC Dimension is complete for this class. Frances and Litman~\cite{FrancesL98} showed that Littlestone's Dimension is also LOGNP-hard. A consequence of these results is that, assuming the Exponential Time Hypothesis (ETH)\footnote{ETH~\cite{ImpagliazzoP01,ImpagliazzoPZ01} states that 3SAT cannot be solved in $2^{o(N)}$ time where $N$ denotes the number of variables.}, VC Dimension and Littlestone's Dimension cannot be solved in $n^{o(\log n)}$ time. More recently, Manurangsi and Rubinstein~\cite{Manurangsi17} extended these lower bounds to rule out even approximation algorithms. Specifically, they showed, assuming ETH, that there is no $(2 - o(1))$-approximation algorithm for VC Dimension and no $(1 + \eps)$-approximation algorithm for Littlestone's Dimension that runs in $n^{\log^{1 - o(1)} n}$ time. Nonetheless, it remains open whether any constant factor approximation is achievable in polynomial time:

\begin{question} \label{openq}
Is there polynomial-time $O(1)$-approximation algorithm for VC Dimension or Littlestone's Dimension?
\end{question}

As mentioned earlier, VC Dimension characterizes the sample complexity in PAC learning only to within a (large) constant factor~\cite{BlumerEHW89,Hanneke16}. Therefore, \Cref{openq} is not only a natural question but it can also be argued that $O(1)$-approximation of VC Dimension is essentially as good as computing it exactly for the purpose of approximating the sample complexity.


\subsection{Our Contributions}
\label{sec:our-contrib}

We answer the above question negatively, under certain computational complexity assumptions. Before we state our results, let us briefly recall the assumptions that we rely on. (All assumptions are formalized in \Cref{sec:prelim}.) In addition to ETH, we will use its strengthening called \emph{Gap-ETH}~\cite{Dinur16,ManurangsiR17-gap-eth} which (roughly) says that there is no $2^{o(N)}$-time algorithm even for an approximate version of 3SAT. The \emph{Planted Clique Hypothesis}~\cite{Karp76,Jerrum92} states that there is no polynomial-time algorithm that can distinguish between an Erdos-Renyi random graph and one with a planted clique of size $N^{\Omega(1)}$. The \emph{Strongish Planted Clique Hypothesis}~\cite{ManurangsiRS21} is a strengthening of Planted Clique Hypothesis that rules out even $N^{o(\log N)}$-time algorithms.

With all the assumptions in mind, we can now state our results. We divide the results into two groups, based on its inapproximability ratio. In the first group of results, we show, under ETH, Gap-ETH and Strongish Planted Clique Hypothesis, that there is no polynomial-time $\tilde{o}(\log n)$-approximation algorithm for VC Dimension / Littlestone's Dimension, as stated below.

\begin{theorem} \label{thm:eth-inapprox}
Assuming ETH, there is no polynomial-time algorithm for approximating VC Dimension or Littlestone's Dimension to within $o\left(\frac{\log n}{(\log \log n)^{\xi}}\right)$ factor for some constant $\xi > 0$.
\end{theorem}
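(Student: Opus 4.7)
The plan is to derive the theorem by composing two ingredients: a gap-preserving reduction from Maximum (Unbalanced) Biclique to both VC Dimension and Littlestone's Dimension, and an ETH-based inapproximability result for Biclique. Both dimensions are handled uniformly by exploiting the sandwich $\VC(\mathcal{C}) \leq \Ldim(\mathcal{C}) \leq \log|\mathcal{C}|$.

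First, I would build the reduction. Given a bipartite graph $G = (U \cup V, E)$ of size $N$, I would construct a concept class $(X, \mathcal{C})$ of size polynomial in $N$ so that a biclique $K_{a, b}$ in $G$ produces a shattered set of size roughly $\log b$ (matching the trivial upper bound $\VC \leq \log|\mathcal{C}|$), while the absence of any large biclique in $G$ forces $\Ldim(\mathcal{C})$ to be correspondingly small. The natural template is to take $X$ as $U$ together with a small label-enforcing gadget (e.g., $O(\log|V|)$ auxiliary elements), and to have each $v \in V$ induce a short family of concepts indexed by binary labels on the gadget; a $K_{a, 2^a}$-biclique then supplies $2^a$ concepts that agree as the all-ones pattern on $a$ chosen points of $U$ while ranging over all $2^a$ labelings on the gadget, witnessing a shattered set of size $a$. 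The converse --- extracting a biclique from a shattered set or a deep mistake tree --- would follow via a pigeonhole argument over the gadget labelings, which reduces any ``diverse'' witness back to a large biclique on the $U$-side.

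Second, I would plug in the ETH-based inapproximability of Maximum Biclique. Under ETH, known reductions (building on Lin's quasi-polynomial constructions and subsequent refinements) rule out polynomial-time approximation of Maximum Biclique within factor $N^{1/(\log\log N)^{\xi'}}$ for some constant $\xi' > 0$. Because the reduction identifies biclique size with the exponential of the target dimension, taking logarithms converts this multiplicative $N^{1/(\log\log N)^{\xi'}}$-gap on Biclique into an $\Omega(\log n / (\log\log n)^{\xi})$-gap on $\VC$ and $\Ldim$, for an appropriate $\xi > 0$ depending on $\xi'$ and on the parameters of the gadget.

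The main obstacle is the soundness analysis for Littlestone's Dimension: controlling $\Ldim(\mathcal{C})$ rather than merely $\VC(\mathcal{C})$ means ruling out deep mistake trees, which are more permissive than shattered sets. I expect to need a counting argument in the spirit of Sauer--Shelah applied across the gadget coordinates, or a direct tree-structural lemma, showing that any deep Littlestone tree in $(X, \mathcal{C})$ must identify a biclique of comparable size on the $U$-side of $G$. This is the technical step that makes a single reduction gap-preserving for both dimensions at once and justifies the common inapproximability ratio stated in the theorem.
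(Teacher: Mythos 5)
Your high-level strategy --- reduce from a biclique-hardness result via a gap-preserving reduction, and handle VC and Littlestone simultaneously via $\VC \le \Ldim$ --- is the right one and matches the paper. But the sketch has three concrete gaps, two of which are load-bearing.

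First, the concept-class construction you describe does not work as stated. If every $v \in V$ contributes a concept for \emph{every} labeling of the $O(\log|V|)$-size gadget, then the gadget coordinates are shattered by the concepts of a single $v$, independently of whether $G$ has any biclique; soundness fails outright. If instead each $v$ contributes only one (or a ``short family'' of) labelings, you have not said which, and completeness --- getting all $2^a$ patterns from the $2^a$ vertices of a $K_{a,2^a}$ --- is exactly the nontrivial step. The paper sidesteps the gadget entirely: take $X = A$, $\C$ indexed by $B$, set $c_b(a) = 0$ whenever $(a,b)\notin E$, and set $c_b(a)$ to an \emph{independent uniform bit} whenever $(a,b)\in E$. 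A biclique $K_{d_1, 2^{d_1}}$ then yields $2^{d_1}$ i.i.d.\ uniform concepts on a $d_1$-set, which shatter a $d_1/2$-subset with constant probability by a union bound (\Cref{lem:vc-random}); soundness holds deterministically because the bipartite graph of $\C$ is a subgraph of $G$.

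Second, you identify the Littlestone soundness as the main obstacle and speculate about a Sauer--Shelah or tree-structural argument, but you do not supply one. This is the crux of the reduction and cannot be left as ``I expect to need.'' The paper's resolution (\Cref{lem:littlestone-from-biclique}) is short and does not use Sauer--Shelah: run an explicit online learner that predicts $0$ as long as at least $2^t$ consistent hypotheses remain (at most $t$ mistakes in this phase, since $t$ mistakes would exhibit a $K_{t,2^t}$), and then switches to the halving algorithm (fewer than $t$ further mistakes since the version space started below $2^t$). So $\Ldim(\C) < 2t$ for any $K_{t,2^t}$-free class.

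Third, the parameter accounting is off. In the passage from balanced to exponential biclique (\Cref{obs:balanced-to-unbalanced}) you take $\log$ of the YES-side biclique size only; the NO-side threshold $d_2$ stays as-is. A multiplicative $N^{1/(\log\log N)^{\xi'}}$-gap on \emph{balanced} biclique (YES contains $K_{q,q}$, NO is $K_{q',q'}$-free with $q/q' = N^{1/(\log\log N)^{\xi'}}$) translates to distinguishing $K_{\log q, 2^{\log q}}$ from $K_{q', 2^{q'}}$-free; for $q = \mathrm{poly}(N)$ this has $q' \gg \log q$ and gives no gap at all. What the paper actually uses is the \emph{asymmetric} hardness of \Cref{thm:red-sat-biclique} from~\cite{ChalermsookCKLM20}: YES contains $K_{n^{\gamma/\sqrt r}, n^{\gamma/\sqrt r}}$ while NO is $K_{r,r}$-free, with $r$ a free parameter. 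Taking $\log$ of the YES side gives $t_1 = \Theta(N/r)$ versus $t_2 = r$, so $t_1/t_2 = \Theta(N/r^2)$ is tunable. Plugging $r \approx \sqrt{N/\alpha(2^N)}$ and composing with Dinur's nearly-linear PCP (\Cref{thm:pcp}), whose $(\log N)^\upsilon$ slack is precisely what produces the $(\log\log n)^\xi$ loss in the statement, finishes the proof. Your sketch does not track this slack or the asymmetric form of the biclique hardness, so the stated approximation factor does not follow from it as written.
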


\begin{theorem} \label{thm:gap-eth-inapprox}
Assuming Gap-ETH, there is no polynomial-time algorithm for approximating VC Dimension or Littlestone's Dimension to within $o(\log n)$ factor.
\end{theorem}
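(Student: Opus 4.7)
The plan is to establish \Cref{thm:gap-eth-inapprox} through a polynomial-time gap-preserving reduction from the Maximum (Unbalanced) Biclique problem, whose optimal inapproximability under Gap-ETH is known: one cannot in polynomial time distinguish an $N$-vertex bipartite graph that contains a biclique of size $N^{1-o(1)}$ from one whose largest biclique is of size $N^{o(1)}$. Since both $\VC(\C)$ and $\Ldim(\C)$ are bounded by $\log|\C|$, a reduction that maps biclique size $k$ to VC / Ldim value $\Theta(\log k)$ automatically converts an $N^{1-o(1)}$ biclique gap into a $(1-o(1))\log n$ versus $o(\log n)$ dimension gap, which is exactly what is needed to rule out every $o(\log n)$-factor polynomial-time approximation.

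For the reduction, given $G = (A \cup B, E)$, I would construct $\C$ over universe $X$ by indexing concepts by pairs $(b, \phi) \in B \times \Phi$ drawn from a small pseudorandom selector family $\Phi$, defining $C_{b,\phi}$ so that its restriction to $N_G(b) \cap A$ equals $\phi$ and it vanishes outside. Completeness: a biclique $K_{r,k}$ with left-side $S \subseteq A$ and $k$ sufficiently large (so that $\{\phi|_S : \phi \in \Phi\}$ covers $\{0,1\}^S$ on the biclique) witnesses $\VC(\C) \geq r$, and after descending greedily level-by-level to select $\phi$'s along the tree, $\Ldim(\C) \geq r$ as well. Soundness: an $r$-shattered set $S$ demands $2^r$ realizations of patterns on $S$, and pigeonholing over $\phi$-choices (each of which places a vertex $b$ into the common neighborhood of $\supp(p)$) recovers a biclique of size roughly $K_{r, 2^r/|\Phi|}$ in $G$, with an analogous tree-path pigeonhole handling the Littlestone case.

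Choosing $r = (1-o(1))\log N$ and $|\Phi| = \mathrm{poly}(N)$ yields $|\C| = \mathrm{poly}(N)$ so $\log n = \Theta(\log N)$, and the YES versus NO dimension gap dominates every $o(\log n)$ function, giving the theorem. The main technical obstacle is constructing $\Phi$ carefully: it must be simultaneously small enough to keep the reduction polynomial and the soundness analysis tight, yet rich enough (like a disperser, or an almost-universal hash family) that completeness realizes all $2^r$ patterns on every biclique's left-side. This is a standard disperser parameter regime that can be met by explicit constructions tuned to the biclique gap, and getting these parameters to match so that both directions of the reduction kick in simultaneously is where most of the technical work lies.
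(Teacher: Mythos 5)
Your proposal contains two gaps, each fatal on its own.

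\textbf{The starting biclique hardness is not available.} You invoke, as a known consequence of Gap-ETH, that one cannot in polynomial time distinguish an $N$-vertex bipartite graph containing a $K_{N^{1-o(1)},N^{1-o(1)}}$ from one whose largest balanced biclique has size $N^{o(1)}$. No such statement is known; this would be an essentially optimal $N^{1-o(1)}$-factor inapproximability for Maximum Balanced Biclique, which remains wide open. What is actually available under Gap-ETH (via \cite{ChalermsookCKLM20}, the paper's \Cref{thm:red-sat-biclique}) is a $(n^{\gamma/\sqrt r}, r)$-Gap Biclique hardness on graphs of size $n = 2^{\Theta(N/\sqrt r)}$: the YES biclique has side $n^{\Theta(1/\sqrt r)} = n^{o(1)}$, not $n^{1-o(1)}$. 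The paper has to work nontrivially with the parameter $r$ (set to a slowly growing function of $N$) precisely because the biclique gap it inherits is so much weaker than the one you assume; plugging the true parameters into your calculation does not give the $(1-o(1))\log n$ vs.\ $o(\log n)$ dimension gap you claim.

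\textbf{Enumerating $\phi \in \Phi$ as separate concepts breaks soundness.} In your construction $\C = \{C_{b,\phi} : (b,\phi) \in B \times \Phi\}$, completeness needs $\{\phi|_S : \phi \in \Phi\}$ to cover $\{0,1\}^S$ for an $r$-element $S$ (note this is independent of the biclique's right-side size $k$: a single right-side vertex $b$ already exposes the whole family $\{\phi|_S : \phi \in \Phi\}$). This forces $|\Phi| \geq 2^r$. But in the soundness direction, extracting a biclique from a shattered set $S$ of size $r$ via the pigeonhole you sketch only yields distinct right-side vertices in number $\approx 2^{r/2}/|\Phi|$, since up to $|\Phi|$ of the $2^{r/2}$ realizing concepts can share the same $b$. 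With $|\Phi| \geq 2^r$, this bound is below $1$, so no biclique at all is recovered and the reduction is vacuous. The paper sidesteps this by assigning a \emph{single} random pattern $\phi_b$ per $b$, so $|\C| = |B|$ with no blowup, and uses the $2^{t_1}$ independent draws along the YES biclique's right side to get shattering (\Cref{lem:vc-random}); the soundness then follows for free because $\C$ is a subgraph of $G$ as a bipartite incidence structure (\Cref{lem:littlestone-from-biclique}). A derandomized variant with $\phi_b = h(b)$ for a pairwise-independent $h$ can work, but one must iterate over $h \in \cH$ producing many small instances, not a single instance with $|B|\cdot|\Phi|$ concepts; as written, yours cannot be repaired by tuning $\Phi$ because the completeness lower bound and the soundness upper bound on $|\Phi|$ cannot simultaneously hold.
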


\begin{theorem} \label{thm:spc-approx}
Assuming the Strongish Planted Clique Hypothesis, there is no polynomial-time algorithm for approximating VC Dimension or Littlestone's Dimension to within $o(\log n)$ factor.
\end{theorem}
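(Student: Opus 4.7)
The plan is to combine (i) the paper's reduction from Maximum (Unbalanced) Biclique to approximating VC Dimension and Littlestone's Dimension, announced in the introduction as the unifying technical tool, with (ii) a polynomial-time $o(\log n)$-factor inapproximability of (Unbalanced) Biclique under the Strongish Planted Clique Hypothesis (SPC).

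For ingredient (ii), I would follow the template of \cite{ManurangsiRS21}, who used SPC to derive polynomial-time inapproximability for Densest-$k$-Subgraph. A direct route for Biclique is to reduce the planted $k$-clique distinguishing problem via a bipartite ``double-cover'' gadget: given $G$ drawn from either $G(N,1/2)$ or the planted-clique distribution with $k = N^{\Omega(1)}$, form the bipartite graph $H$ on $V(G) \sqcup V(G)$ with $(u,v)$ an edge iff $uv \in E(G)$ or $u = v$. A $k$-clique in $G$ induces a $K_{k,k}$ in $H$, whereas in the null case the maximum biclique is $O(\log N)$. Since polynomial time is $N^{o(\log N)}$, any polynomial-time $o(\log N)$-approximation for Maximum Biclique would distinguish the two distributions, contradicting SPC. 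For the unbalanced variant, I would start instead from an unbalanced bipartite planted distribution with parts of unequal size and a planted $K_{k_L,k_R}$, and run the analogous argument.

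Plugging this biclique hardness into the paper's biclique-to-VC/Littlestone reduction then yields the theorem, provided the reduction is polynomial-time and preserves the approximation factor up to constants. The main obstacle is matching parameter regimes: shattering $b$ elements requires $2^b$ witnessing concepts, so the reduction naturally relates a $K_{a,b}$-biclique with $a = 2^b$ to VC Dimension $b$. One must therefore invoke the SPC-based biclique hardness in the correspondingly unbalanced regime --- with the small side of size $\Theta(\log n)$ --- and verify that an $o(\log n)$ gap survives the exponential reparametrization. Since SPC in fact delivers a much stronger ($N^{\Omega(1)}$-factor) biclique gap, there is ample slack; the delicate step is checking that the exponential translation on the ``concept'' side does not collapse the gap on the ``shattering'' side below $o(\log n)$.
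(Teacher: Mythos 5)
Your high-level decomposition — SPC $\Rightarrow$ hardness of biclique, then the paper's biclique-to-VC/Littlestone reduction — matches the paper's structure, and the bipartite double-cover gadget you describe is indeed how the paper proves its folklore \Cref{lem:spc-biclique} in \Cref{app:pc-biclique}. But there is a genuine gap at exactly the step you flag as ``delicate,'' and the ``ample slack'' intuition there is wrong. The double-cover gives a $(\lceil N^\delta\rceil,\lceil \zeta\log N\rceil)$-Gap \emph{Balanced} Biclique instance, and the only cost-free way to convert this to the Exponential Biclique problem needed by \Cref{thm:main-red} is \Cref{obs:balanced-to-unbalanced}, which produces a $(\lfloor\delta\log N\rfloor,\lceil\zeta\log N\rceil)$-Gap Exponential Biclique instance. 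Since $\delta<1<\zeta$, the YES threshold is \emph{below} the NO threshold: the instance is vacuous, not merely a weaker gap. The $N^{\Omega(1)}$-versus-$\log N$ slack in the balanced problem does not survive the $\log$ that the exponential reparametrization applies to the YES side; there is no slack at all.

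The paper's actual proof fixes this by inserting a two-sided randomized graph product (\Cref{fig:graph-prod}, \Cref{lem:graph-prod-main}) between \Cref{lem:spc-biclique} and \Cref{obs:balanced-to-unbalanced}, with blowup parameter $\ell = o(\log N)$ so that the product graph has size $N^{o(\log N)}$ (crucially still within the time budget that SPC rules out). The product makes the YES side grow to $2^{t_1}$ while the soundness argument (via a disperser-type lemma, \cite[Lemma 7]{ManurangsiRS21}) keeps the NO side at $t_2$, which is what gives a genuine $t_1/t_2 = o(\log n)$ gap after \Cref{obs:balanced-to-unbalanced}. Your suggested alternative of starting from an ``unbalanced bipartite planted distribution with a planted $K_{k_L,k_R}$'' is not an available hypothesis: SPC is about planted cliques in ordinary random graphs, and a direct planted $K_{\Theta(\log N),\,N^{\Theta(1)}}$ in a random bipartite graph is likely detectable (e.g.\ via co-degree statistics on the small side), so you would be assuming hardness of a distribution that may well be easy. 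You need the graph product, or something playing its role, to make the exponential regime nontrivial.
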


Not only do these results answer~\Cref{openq}, but the inapproximability factors are essentially the best possible: an algorithm that always output one is an $(\log n)$-approximation algorithm.


If we only want to rule out any constant-factor approximation, then our lower bounds hold not only against polynomial-time algorithms but slightly quasi-polynomial time algorithms as well:

\begin{theorem} \label{thm:spc-time}
Assuming the Strongish Planted Clique Hypothesis, there is no $n^{o(\log n)}$-time algorithm for approximating VC Dimension or Littlestone's Dimension to within any constant factor.
\end{theorem}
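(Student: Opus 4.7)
The plan is to derive \Cref{thm:spc-time} by composing two reductions: one from the Strongish Planted Clique Hypothesis to a gap version of the (unbalanced) Biclique problem, and a second from (unbalanced) Biclique to VC Dimension and Littlestone's Dimension --- the latter being the main technical reduction of the paper, which I would invoke as a black box (the same reduction already needed for \Cref{thm:spc-approx}).

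For the first step, I would use the chain of reductions underlying \cite{ManurangsiRS21}, which establishes that under the Strongish Planted Clique Hypothesis there is no $N^{o(\log N)}$-time constant-factor approximation for clique-type / dense subgraph problems on $N$-vertex graphs. I would then transfer this to unbalanced biclique via the standard gap-preserving reduction: given a graph $G$ on $N$ vertices and a parameter $t$, form the bipartite graph $H$ with $L = V(G)$ and $R = \binom{V(G)}{t}$, and place $(v, S)$ in the edge set when $v \notin S$ and $v$ is adjacent in $G$ to every vertex of $S$. A $k$-clique in $G$ produces a $(k - t, \binom{k}{t})$-biclique in $H$, and conversely a large biclique in $H$ exposes a dense substructure in $G$. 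Choosing $t = \Theta(k)$ (with $k = \Theta(\log N)$) yields an unbalanced biclique instance of size $N^{O(1)}$ where the small side has length $a = \Theta(\log N)$ and the large side has length $\approx 2^a$, which is exactly the regime fed to the paper's main reduction.

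For the second step, I would apply the paper's main reduction from (unbalanced) Biclique to VC Dimension / Littlestone's Dimension. This produces a concept class of size $n = \text{poly}(N)$ such that an $(a, 2^a)$-biclique in $H$ forces shattering of $\Omega(a)$ points, while the absence of any $(a/c, 2^{a/c})$-biclique caps the VC dimension at $O(a/c)$, for a constant $c$. Since $\log n = \Theta(\log N)$, a hypothetical $C$-approximation algorithm running in time $n^{o(\log n)} = N^{o(\log N)}$ would distinguish the two cases of the biclique gap, contradicting the first step. The same argument, with the Littlestone variant of the reduction, handles Littlestone's Dimension.

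The main obstacle is making the constant-factor gap survive through both reductions. The delicate point is that VC Dimension is \emph{logarithmic} in the biclique size, so a constant-factor gap in biclique size would normally collapse to a vanishing gap in VC Dimension. This is precisely why the reduction must start from the \emph{unbalanced} version, where the short side already has length $\Theta(\log n)$ and a constant-factor gap on that short side transfers directly into a constant-factor gap for VC Dimension. Accordingly, the careful work is in tuning the clique-to-biclique step so that, under the Strongish Planted Clique Hypothesis, one obtains a gap of the form $(a, 2^a)$ versus $(a/c, 2^{a/c})$ with $a = \Theta(\log N)$ and total size $N^{O(1)}$, matching the threshold the VC Dimension reduction requires.
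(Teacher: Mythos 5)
Your high-level plan is the same as the paper's: compose a reduction from the Strongish Planted Clique Hypothesis to a gap (exponential) biclique problem with the main reduction \Cref{thm:main-red}, and indeed the paper does exactly this, proving \Cref{thm:spc-time} by the same argument as \Cref{thm:pc} but starting from \Cref{lem:spc-biclique} instead of \Cref{lem:pc-biclique}. You also correctly identify the crux: the short side of the unbalanced biclique must be $\Theta(\log n)$, so a constant-factor gap there survives as a constant-factor gap for VC/Littlestone's Dimension. However, the specific clique-to-unbalanced-biclique construction you propose has a parameter error that breaks precisely the running-time statement you are trying to prove.

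You take $R = \binom{V(G)}{t}$ with $t = \Theta(\log N)$, and assert this gives an instance of size $N^{O(1)}$. It does not: $|R| = \binom{N}{t} = N^{\Theta(\log N)}$, which is quasipolynomial. This is fatal for \Cref{thm:spc-time}. If the resulting concept class has size $n = N^{\Theta(\log N)}$, then $\log n = \Theta(\log^2 N)$ and an $n^{o(\log n)}$-time algorithm runs in $N^{o(\log^3 N)}$ time, which is entirely consistent with (and hence does not contradict) the $N^{\Omega(\log N)}$ hardness coming from the Strongish Planted Clique Hypothesis. For the running-time bound to transfer, the reduction must produce an instance of size $N^{O(1)}$, i.e.\ $\log n = \Theta(\log N)$.

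The paper's route sidesteps this issue by splitting into two polynomial-size steps. First, \Cref{lem:spc-biclique} (the standard vertex-doubling reduction) turns Planted Clique into balanced $(\lceil N^\delta\rceil, \lceil\zeta\log N\rceil)$-Gap Biclique with only polynomial blowup. Second, the one-sided graph product of \Cref{fig:graph-prod-one-sided} with a \emph{constant} $\ell_R = \lceil t_1/\log q_1\rceil = O(1)$ inflates the large side to $N^{\delta\ell_R} = N^{O(1)}$; because the YES-case biclique is already polynomially large ($q_1 = N^\delta$), raising it to a constant power suffices to reach the $K_{t_1, 2^{t_1}}$ regime with $t_1 = \Theta(\log N)$ while the overall graph stays polynomial. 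Your $\binom{V}{t}$ construction is morally the same operation, and with $t$ taken to be a sufficiently large \emph{constant} (rather than $\Theta(\log N)$) a version of it could be made to work; but as written, the quasipolynomial blowup makes the running-time claim fail.
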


\begin{theorem} \label{thm:gap-eth-time}
Assuming Gap-ETH, there is no $n^{o\left({(\log n)^{1/3}}\right)}$-time algorithm for approximating VC Dimension or Littlestone's Dimension to within any constant factor.
\end{theorem}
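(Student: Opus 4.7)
The plan is to compose two ingredients: (i) a Gap-ETH-based running-time lower bound for constant-factor approximation of Maximum (Unbalanced) Biclique, and (ii) the gap-preserving, polynomial-size reduction from Maximum (Unbalanced) Biclique to VC Dimension / Littlestone's Dimension established elsewhere in this paper (and advertised in the abstract).

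For ingredient (i), I would invoke the fact that, under Gap-ETH, no algorithm approximates Maximum (Unbalanced) Biclique within any constant factor in time $N^{o((\log N)^{1/3})}$. Such a statement is obtained by going from Gap-ETH to a gap version of a dense CSP or label cover via a birthday-repetition-style argument, and then from there to biclique; the $1/3$ exponent reflects the quantitative loss accumulated through these intermediate steps (in contrast to the sharper $\log N$ exponent that is available from the Strongish Planted Clique Hypothesis).

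For ingredient (ii), the reduction of this paper takes a bipartite graph on $N$ vertices and produces a concept class of size $n = N^{O(1)}$ such that a constant-factor approximation to the VC Dimension (resp.\ Littlestone's Dimension) of the output yields a constant-factor approximation to the Maximum (Unbalanced) Biclique of the input. Since the reduction is polynomial-size, we have $\log n = \Theta(\log N)$; hence any $n^{o((\log n)^{1/3})}$-time algorithm for either dimension could be used to solve biclique approximation in time $N^{O(1)} \cdot n^{o((\log n)^{1/3})} = N^{o((\log N)^{1/3})}$, contradicting ingredient (i).

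The main obstacle is quantitative: one must track the approximation factor and the size blow-up through the composition so that the final conclusion is a constant-factor inapproximability rather than some sub-constant factor. This is routine provided both ingredients are gap-preserving for arbitrary constants. The delicate point is the emergence of the $(\log N)^{1/3}$ exponent in ingredient (i), which is strictly weaker than the $\log N$ exponent used in \Cref{thm:spc-time} and directly accounts for the weaker $(\log n)^{1/3}$ exponent in the conclusion here.
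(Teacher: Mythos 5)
Your proposal is correct and follows essentially the same route as the paper: derive Gap-ETH hardness of constant-gap Gap Exponential Biclique in time $n^{o((\log n)^{1/3})}$ (via the birthday-repetition-based reduction of \cite{ChalermsookCKLM20} with repetition parameter $r = \Theta(\sqrt{N})$, yielding instance size $n = 2^{\Theta(N^{3/4})}$ and hence the $1/3$ exponent), then compose with the size-preserving reduction of \Cref{thm:main-red}. The paper inlines what you call ``ingredient (i)'' into the proof rather than isolating it as a lemma, but the decomposition, the source of the $1/3$ exponent, and the composition argument are all as you describe.
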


\begin{theorem} \label{thm:eth-time}
Assuming ETH, there is no $n^{o\left(\frac{(\log n)^{1/3}}{(\log \log n)^{\xi}}\right)}$-time algorithm for approximating VC Dimension or Littlestone's Dimension to within any constant factor for some constant $\xi > 0$.
\end{theorem}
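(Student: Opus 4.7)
The plan is to combine the paper's main reduction from Maximum (Unbalanced) Biclique to VC Dimension and Littlestone's Dimension with an ETH-based running-time lower bound for constant-factor approximation of Biclique. This mirrors the proof of \Cref{thm:gap-eth-time}: the reduction itself is identical; only the starting hardness changes. Because the reduction is polynomial-size and preserves a constant gap, any running-time lower bound on Biclique transfers almost mechanically into a corresponding lower bound on both dimensions, up to constants in the exponent.

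The key ingredient to import is the following: under ETH, there is no $N^{o\left((\log N)^{1/3}/(\log \log N)^\xi\right)}$-time constant-factor approximation algorithm for Maximum (Unbalanced) Biclique on $N$-vertex graphs, for some constant $\xi > 0$. Such a statement can be obtained by composing Dinur's PCP theorem (together with parallel repetition) with a birthday-repetition-style reduction, following the template used in prior work on Densest $k$-Subgraph and Biclique under ETH. The essential difference from the Gap-ETH setting is that turning ETH into a problem with a constant inapproximability gap requires passing through a PCP, and this step introduces a $\mathrm{polylog}(N)$ multiplicative overhead in the exponent. Packaging this overhead as $(\log \log N)^\xi$ for a sufficiently large universal constant $\xi > 0$ recovers exactly the quantitative form claimed in the theorem. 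Composing with the Biclique-to-VC/Littlestone reduction, whose size blow-up from $N$ to $n$ is polynomial, preserves the exponent up to constants and produces the final $n^{o\left((\log n)^{1/3}/(\log \log n)^\xi\right)}$ lower bound.

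The main obstacle, as in any ETH-to-approximation argument, is book-keeping: one must verify that every source of loss along the pipeline---the soundness gap of the PCP, any repetition used for amplification, the size blow-up inherent to the birthday reduction, and the polynomial blow-up of the VC/Littlestone reduction itself---contributes at most a $(\log \log n)^{O(1)}$ factor to the exponent, rather than anything worse such as $(\log n)^{\Omega(1)}$. Individually each of these steps is standard, but because the target exponent $(\log n)^{1/3}/(\log \log n)^\xi$ sits close to the tightness threshold set by \Cref{thm:gap-eth-time}, no slack can be wasted, and the constant $\xi$ must be chosen large enough to subsume every polylog overhead introduced along the way. Once the ETH hardness of Biclique with the correct parameters is in hand, the remainder of the argument is a direct plug-in of the paper's reduction.
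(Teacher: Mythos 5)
Your proposal is correct and matches the paper's approach: replace the Gap-ETH starting point of \Cref{thm:gap-eth-time} with ETH plus the nearly-linear-size PCP of Dinur (\Cref{thm:pcp}), which costs a $\mathrm{polylog}(N)$ factor in the exponent that, after the size blow-up to $n = 2^{\Theta(N^{3/4})}$, manifests as the $(\log\log n)^\xi$ denominator with $\xi = \upsilon$. One small inaccuracy: the paper's pipeline does not invoke parallel repetition---\Cref{thm:pcp} already yields a constant soundness gap, and the subsequent biclique reduction (\Cref{thm:red-sat-biclique}) is applied directly with the same choice $r = \Theta(\sqrt{N})$ as in the Gap-ETH proof---so that piece of your description is unnecessary (though harmless), and the remaining bookkeeping you flag is indeed the content that the paper carries out explicitly.
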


Recall that there is also an $n^{O(\log n)}$-time exact algorithm for computing both dimensions. Therefore, the running time lower bound in \Cref{thm:spc-time} is tight. On the other hand, while \Cref{thm:gap-eth-time,thm:eth-time} achieve larger inapproximability factors compared to \cite{ManurangsiR17}, the running time lower bounds are weaker than that in \cite{ManurangsiR17} due to technical reasons which we discuss more in \Cref{sec:conclusion}.

Finally, we also note that under the Planted Clique Hypothesis, we can rule out constant factor approximation but only against polynomial-time algorithms:

\begin{theorem} \label{thm:pc}
Assuming the Planted Clique Hypothesis, there is no polynomial-time algorithm for approximating VC Dimension or Littlestone's Dimension to within any constant factor.
\end{theorem}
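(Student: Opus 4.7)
The plan for \Cref{thm:pc} is to chain the paper's main reduction (from Maximum Unbalanced Biclique to approximating VC Dimension and Littlestone's Dimension, developed in an earlier section) with a standard consequence of the Planted Clique Hypothesis, namely that Maximum (Unbalanced) Biclique does not admit a polynomial-time constant-factor approximation.

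First, I would invoke the approximation-preserving reduction at the heart of this paper: given a bipartite graph $G$ on $N$ vertices, it produces in $\mathrm{poly}(N)$ time a concept class $\C \subseteq 2^X$ with $|X|, |\C| = \mathrm{poly}(N)$ such that the VC dimension and the Littlestone dimension of $\C$ are each tied, up to a constant factor, to the size of the largest unbalanced biclique of $G$. Consequently, an $\alpha$-approximation algorithm for VC Dimension or Littlestone's Dimension in polynomial time yields an $\Omega(\alpha)$-approximation algorithm for Maximum Unbalanced Biclique in polynomial time.

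Second, I would appeal to the fact that, under PCH, Maximum (Unbalanced) Biclique admits no polynomial-time constant-factor approximation. The usual argument here is a direct reduction from planted clique to planted biclique via a random bipartition: randomly split the vertex set of an instance of $G(N,1/2)$ (possibly with a planted clique) into $A \cup B$ and pass to the induced bipartite graph. With high probability, a planted clique of size $k = N^{\Omega(1)}$ yields a biclique of size $\Theta(k)$ on each side, whereas the largest biclique in the null distribution is $O(\log N)$, which produces an arbitrarily large constant-factor (in fact super-constant) gap for Biclique.

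Composing the two reductions yields the theorem: a polynomial-time constant-factor approximation algorithm for VC Dimension or Littlestone's Dimension would produce, through the Biclique-to-dimension reduction, a polynomial-time constant-factor approximation algorithm for Maximum Unbalanced Biclique, contradicting PCH. The main subtlety is to verify that the parameter regime in which PCH-based biclique hardness is phrased (bicliques of size $N^{\Omega(1)}$ versus $O(\log N)$) lies within the range that the Biclique-to-dimension reduction can handle while keeping the output concept class of polynomial size; but this is precisely the same regime exploited by \Cref{thm:spc-approx} under the stronger SPCH assumption, so the argument transfers with only the hypothesis swapped and no new technical difficulty.
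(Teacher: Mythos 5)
Your high-level plan is the right one and matches the paper's strategy in outline: reduce Planted Clique to Gap Biclique, then feed the result through the main reduction (\Cref{thm:main-red}) to VC/Littlestone Dimension. However, your closing sentence contains a genuine gap. You write that the parameter bookkeeping is ``precisely the same regime exploited by \Cref{thm:spc-approx} under the stronger SPCH assumption, so the argument transfers with only the hypothesis swapped and no new technical difficulty.'' This is false, and it is exactly where a new idea is needed.

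The obstacle is that \Cref{thm:main-red} requires a gap in the $(t_1, t_2)$-Gap \emph{Exponential} Biclique problem, i.e. bicliques of shape $K_{t, 2^t}$. PCH gives you hardness of $(\lceil N^\delta\rceil, \zeta\lceil\log N\rceil)$-Gap balanced Biclique (\Cref{lem:pc-biclique}), and the naive conversion (\Cref{obs:balanced-to-unbalanced}) turns this into $(\delta\log N, \zeta\log N)$-Gap Exponential Biclique with $\delta < 1 < \zeta$ --- the YES parameter is \emph{smaller} than the NO parameter, so the instance is vacuous. The paper handles this with the \emph{one-sided graph product} of \Cref{fig:graph-prod-one-sided} and \Cref{lem:graph-prod-one-sided}: taking the $\ell_R$-th power on the $B$-side with a \emph{constant} $\ell_R$ boosts the YES-side from $N^\delta$ to $N^{\delta\ell_R} \geq 2^{t_1}$ while leaving the NO-side bound essentially unchanged, all within polynomial output size. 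Crucially, the proof of \Cref{thm:spc-approx} does \emph{not} transfer: that proof uses the two-sided randomized graph product (\Cref{fig:graph-prod}) with $\ell = o(\log N)$ and produces an instance of size $N^{o(\log N)}$, which is super-polynomial. That is exactly why \Cref{thm:spc-approx} needs the Strongish Planted Clique Hypothesis (which rules out $N^{o(\log N)}$-time algorithms); a super-polynomial-size reduction is useless against mere PCH, which only gives a polynomial-time lower bound. So a new, polynomial-size construction is required for \Cref{thm:pc}, and the one-sided product supplies it. Your proposal omits this construction and incorrectly asserts that nothing new is needed.
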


We remark that our results are also strong enough to rule out any \emph{fixed-parameter tractable algorithms}\footnote{Please refer to e.g.~\cite{DowneyF13,CyganFKLMPPS15} for background on parameterized complexity.} with non-trivial approximation ratios. We defer the discussion on this to~\Cref{app:fpt-inapprox}.

\subsubsection{Reduction from (Unbalanced) Biclique}
\label{sec:prelim}

All of our results are proved via a simple reduction from an unbalanced version of the Maximum Balanced Complete Bipartite Graph (Maximum Biclique) problem. Indeed, the fact that we can achieve multiple flexible results can be attributed to the rich literature on the Maximum Biclique problem. Furthermore, by identifying this as a hard problem underpinning the approximation of VC Dimension and Littlestone's dimension, our reduction helps demystify the rather specific reductions in \cite{ManurangsiR17}, which seem to be tailored towards starting from the Label Cover problem--and do not seem to be applicable e.g. with Planted Clique hypotheses.

To define the Maximum Biclique problem, we need a few additional terminologies: we use $K_{a, b}$ where $a, b \in \N$ to denote the bipartite complete graph (biclique) with $a$ vertices on one side and $b$ vertices on the other. We say that a bipartite graph $G = (A, B, E)$ \emph{contains $K_{a, b}$} if there exist $a$-size $S \subseteq A$ and $b$-size $T \subseteq B$ such that $S, T$ induces a complete bipartite subgraph. Otherwise, if no such $S, T$ exist, we say that $G$ is \emph{$K_{a, b}$-free}.

Most work in the literature has considered the case of \emph{balanced} biclique, i.e. $a = b$. The Maximum Balanced Biclique problem then asks for the maximum $a$ such that the input graph $G$ contains $K_{a, a}$. For the purpose of stating its inapproximability, it is useful to define the following ``gap'' version\footnote{We provide an additional discussion on gap problems in \Cref{subsec:gap-problems}.}, as stated below. Note that if $(q_1, q_2)$-Gap Biclique Problem cannot be efficiently solved, then there is no efficient $(q_2/q_1)$-approximation algorithm for Maximum Balanced Biclique.

\begin{definition}[$(q_1, q_2)$-Gap Biclique Problem] \label{prob:gap-biclique}
Given a bipartite graph $G = (A, B, E)$, distinguish between the following two cases:
\begin{itemize}
\item (YES) $G$ contains $K_{q_1, q_1}$ as a subgraph.
\item (NO) $G$ is $K_{q_2, q_2}$-free.
\end{itemize}
\end{definition}

The Maximum Balanced Biclique problem turns out to be a challenging research question from hardness of approximation point of view. Despite the well-understood status of its non-bipartite counterpart~\cite{Hastad96,KhotP06,Zuckerman07}, it is not even known if Maximum Balanced Biclique problem is NP-hard to approximate to within a factor of 1.0001. Fortunately for us, several hardness results are known under stronger assumptions~\cite{FK04,BhangaleGHKK16,Khot06,Manurangsi17,Manurangsi17-icalp,ChalermsookCKLM20,ManurangsiRS21}.

As alluded to earlier, the exact problem we will use is not the balanced version above. We need an unbalanced version where one side of the biclique has $t$ vertices and the other side has $2^t$ vertices, as defined below.

\begin{definition}[$(t_1, t_2)$-Gap Exponential Biclique Problem] \label{prob:gap-exp-biclique}
Given a bipartite graph $G = (A, B, E)$, distinguish between the following two cases:
\begin{itemize}
\item (YES) $G$ contains $K_{t_1, 2^{t_1}}$ as a subgraph.
\item (NO) $G$ is $K_{t_2, 2^{t_2}}$-free.
\end{itemize}
\end{definition}

It should be noted that Gap Exponential Biclique is quite different from Gap Biclique. For example, to get a constant gap--say $t_1 \geq 2t_2$--in the former, we need the number of vertices on one side to be $2^{t_1}$ in the YES case versus $2^{t_1/2} = \sqrt{2^{t_1}}$ in the NO case. This seems closer to the setting $q_2 = \sqrt{q_1}$ (i.e. $\sqrt{q_1}$ gap) in the Gap Biclique problem. Furthermore, Gap Exponential Biclique must have $t_1, t_2 \leq \log n$, meaning that the problem is solvable in time $n^{O(\log n)}$. Indeed, in our hardness results below, we have to resort to literature from \emph{parameterized} hardness of Gap Biclique, which also considers the case $q_1, q_2 \ll n$. In some cases, we have to adapt the reductions slightly to get the parameters sufficiently strong for Gap Exponential Biclique.

Let us now turn our attention back to the main reduction, which is from Gap Exponential Biclique to the question of approximating VC / Littlestone's Dimension. In fact, our reduction proves the hardness of approximating \emph{both} VC and Littlestone's Dimension simultaneously. Specifically, we show that the following gap problem is hard\footnote{Recall that the Littlestone's Dimension of any class is no less than its VC Dimension.}:

\begin{definition}[$(d_1, d_2)$-$\VC$-$\Ldim$ Problem] \label{prob:gap-vcldim}
Given a concenpt class $\C$ as a binary matrix, distinguish between the following two cases:
\begin{itemize}
\item (YES) The VC Dimension of $\C$ is at least $d_1$.
\item (NO) The Littlestone's Dimension of $\C$ is less than $d_2$.
\end{itemize}
\end{definition}

With all the components defined, we can now state the properties of our reduction:

\begin{theorem} \label{thm:main-red}
If there is an $T(n)$-time algorithm for the $(d_1, d_2)$-$\VC$-$\Ldim$ Problem, there is an $O(T(n))$-time (randomized) algorithm for the $(d_1/2, 2d_2)$-Gap Exponential Biclique Problem.
\end{theorem}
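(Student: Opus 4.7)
The plan is to construct a concept class $\C$ from the bipartite graph $G = (A, B, E)$ with universe $X = A \sqcup \{y_1, \ldots, y_\ell\}$, where $\ell = \lceil \log_2 |B| \rceil$, after fixing an injective labeling $\mathrm{lab}\colon B \to \{0,1\}^\ell$. For each $b \in B$ and each $T \subseteq N(b)$, include in $\C$ the concept $C_{b,T}$ defined by $C_{b,T}(a) = \mathbf{1}[a \in T]$ on $a \in A$ and $C_{b,T}(y_j) = \mathrm{lab}(b)_j$ on each $y_j$. The randomness in the resulting algorithm will enter through the choice of which coordinates to probe.

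For completeness, suppose $G$ contains $K_{d_1/2, 2^{d_1/2}}$ with sides $S_A$ and $S_B$. Since $|S_B|=2^{d_1/2}$ distinct strings live in $\{0,1\}^\ell$, a standard greedy/halving argument (or a random choice with positive success probability) produces a size-$(d_1/2)$ set $L \subseteq \{y_1,\ldots,y_\ell\}$ such that $\{\mathrm{lab}(b)|_L : b \in S_B\} = \{0,1\}^L$. Then $S_A \cup L$ (of size $d_1$) is shattered by $\C$: for any target $(\sigma,\tau) \in \{0,1\}^{S_A} \times \{0,1\}^L$ pick the unique $b \in S_B$ with $\mathrm{lab}(b)|_L = \tau$ and set $T = \supp(\sigma) \subseteq S_A \subseteq N(b)$; then $C_{b,T}$ realizes $(\sigma,\tau)$. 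This establishes $\VC(\C) \geq d_1$.

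For soundness, I would argue the contrapositive: if $\Ldim(\C) \geq d_2$ then $G$ must contain $K_{2d_2,2^{2d_2}}$. Given a mistake tree $\cT$ of depth $d_2$, each internal node is either an ``$A$-query'' (splitting on whether $a \in T$) or a ``label-query'' (splitting on a bit of $\mathrm{lab}(b)$). My plan is to trace, along each root-to-leaf path, how the $A$-queries force elements of $A$ into $T_\lambda \subseteq N(b_\lambda)$ while the label-queries distinguish the $b_\lambda$'s; combining these two kinds of information across the $2^{d_2}$ leaves should isolate a common $2d_2$-subset of $A$ that lies in $N(b)$ for $2^{2d_2}$ distinct $b \in B$, which is the desired biclique.

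The main obstacle is the soundness step. A naive accounting of the mistake tree only yields a biclique whose two side parameters sum to $d_2$, which is a factor of four short of the required $4d_2$. Worse, a single vertex $b^* \in B$ with $|N(b^*)|$ large already realizes a mistake tree of depth $|N(b^*)|$ via the subfamily $\{C_{b^*,T}\}_{T \subseteq N(b^*)}$, so $K_{2d_2,2^{2d_2}}$-freeness of $G$ does not \emph{a priori} bound $\Ldim(\C)$ in the construction above. Closing this gap will likely require a refinement of the construction (for instance restricting which $T$'s are admissible, or inserting a gadget that prevents any single $b^*$ from inflating $\Ldim$) together with a Ramsey-style decomposition of $\cT$ that extracts a balanced subtree in which both $A$-queries and label-queries contribute substantively to the final biclique; pinning down the precise combination is the crux of the proof.
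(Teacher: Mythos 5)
Your proposal has three genuine problems, two of which you did not flag.

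First, the size blow-up. You take one concept $C_{b,T}$ for \emph{every} $b\in B$ and \emph{every} $T\subseteq N(b)$, so $|\C| = \sum_{b\in B} 2^{|N(b)|}$, which can be $2^{\Omega(n)}$. The theorem needs an $O(T(n))$-time algorithm for Gap Exponential Biclique, which forces the reduction to produce an instance of size $O(n)$; your concept class cannot even be written down in the allotted time. (The paper's reduction keeps $X=A$, $\C$ indexed by $B$, so the matrix has exactly the shape of the input adjacency matrix.)

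Second, the completeness step is also broken. Having $2^{d_1/2}$ distinct strings $\{\mathrm{lab}(b): b\in S_B\}\subseteq\{0,1\}^\ell$ does not imply the existence of a size-$(d_1/2)$ coordinate set $L$ on which their restrictions cover all of $\{0,1\}^L$. Take $S_B$ whose labels form the ``initial segment'' family $\{1^i0^{\ell-i}: 0\le i\le 2^{d_1/2}-1\}$ (with $\ell$ large enough that this is injective). These $2^{d_1/2}$ strings have VC dimension $1$ as a concept class on $[\ell]$: no two coordinates can be shattered, let alone $d_1/2$ of them. So neither a greedy/halving argument nor a random choice of $L$ can succeed, and $\VC(\C)\ge d_1$ does not follow.

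Third, the soundness failure you identified is real and fatal for your construction: a single high-degree $b^*$ gives a subfamily $\{C_{b^*,T}: T\subseteq N(b^*)\}$ that shatters all of $N(b^*)$, so $\VC(\C)$ and $\Ldim(\C)$ can be as large as the max degree even when $G$ is $K_{2d_2,2^{2d_2}}$-free. The fix you gesture at (``restrict which $T$'s are admissible'') is essentially what the paper does, but in the simplest possible way: for each $b$ keep only \emph{one} random $T\subseteq N(b)$, i.e., define $c_b(a)=0$ if $(a,b)\notin E$ and let $c_b(a)$ be an independent uniform bit if $(a,b)\in E$. Then the bipartite graph underlying $\C$ is a subgraph of $G$, so $K_{d_2,2^{d_2}}$-freeness of $G$ is inherited by $\C$, and \Cref{lem:littlestone-from-biclique} immediately gives $\Ldim(\C)<2d_2$ (that lemma supplies the halving/majority-vote argument you wanted to do by hand, and does it against \emph{all} $b$'s at once rather than via a Ramsey decomposition of a mistake tree). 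For completeness, a $K_{d_1,2^{d_1}}$ in $G$ makes $\{c_t|_S: t\in T\}$ a family of $2^{d_1}$ i.i.d.\ uniform functions on a $d_1$-point set $S$, and \Cref{lem:vc-random} shows this shatters a $d_1/2$-subset with probability at least $2/3$; this replaces your labeling gadget entirely, and the randomness of the reduction is exactly what makes the shattering go through. No auxiliary coordinates $y_j$ are needed.
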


That is, any hardness of Gap Exponential Biclique with gap $\alpha$ can be translated to hardness of $\VC$-$\Ldim$ Problem with gap $\alpha/4$. This allows us to easily translate the hardness approximating of biclique problems to that of approximating VC / Littlestone's Dimension.

\subsection*{Organization}
The rest of this paper is organized as follows. We provide some additional definitions in \Cref{sec:prelim}. We then present the main reduction (\Cref{thm:main-red}) in \Cref{sec:red}. The subsequent three sections then establish concrete hardness results (stated in \Cref{sec:our-contrib}). Finally, we conclude with some discussion and open questions in \Cref{sec:conclusion}.

\section{Preliminaries}
\label{sec:prelim}

All approximation problems considered in this work are maximization problem. For these problems, we say that an algorithm\footnote{We assume for simplicity that these algorithms are deterministic, although it is trivial to extend the formalism and proofs to randomized approximation algorithms.} has an approximation ratio (or factor) $r$ if its output always lie between $\OPT/r$ and $\OPT$, where $\OPT$ denote the true optimum for the input instance.

\subsection{VC Dimension and Littlestone's Dimension}

For a domain $X$, a concept $c$ is simply a function from $X$ to $\{0, 1\}$. A concept class $\C$ is a set of concepts. For a subset $S \subseteq X$ and a concept $c$, let $c|_S$ denote the restriction of $c$ on $S$, i.e. $c|_S: S \to \{0, 1\}$ such that $c|_S(x) = c(x)$ for all $x \in S$. Furthermore, define $\C|_S := \{c|_S \mid c \in \C\}$.

We now recall the definition of VC Dimension and Littlestone's Dimenion.

\begin{definition}[VC Dimension~\cite{VC}]
The VC dimension of a concept class $\C$, denoted by $\VC(\C)$, is defined as the size of the largest subset $S \subseteq X$ such that $\C|_S$ contains all boolean functions on $S$.
\end{definition}

\begin{definition}[Online Algorithm \& Mistake Bound] \label{def:mistake-bound}
An online learning algorithm $\A$ for a concept class $\C$ is an algorithm that, at each step $i$, is given a sample $x_i$, the algorithm has to make a prediction $z_i$ and afterwards the algorithm is told the correct label $y_i$. The mistake bound of $\A$ for a sequence\footnote{Throughout this work, we assume w.l.o.g. that $x_1, \dots, x_T$ are distinct.} $(x_1, y_1), \dots, (x_T, y_T)$ is defined as the number of incorrect predictions (i.e. $\sum_{i \in [T]} \ind[z_i \ne y_i]$).

The mistake bound of $\A$ for a concept class $\C$ is defined as the maximum mistake bound of $\A$ across all sequences $(x_1, y_1), \dots, (x_T, y_T)$ that are realizable by $\C$ (i.e. there exists $c \in \C$ such that $y_i = c(x_i)$ for all $i \in [T]$).
\end{definition}

\begin{definition}[Littlestone's Dimension~\cite{Littlestone87}]
The Littlestone's Dimension of $\C$, denoted by $\Ldim(\C)$, is defined as the minimum mistake bound for $\C$ across all online algorithms $\A$.
\end{definition}

We note that the Littlestone's Dimension is sometimes defined in terms of mistake trees, but it was shown in~\cite{Littlestone87} that this is equivalent to the above mistake bound definition. We choose to work with the mistake bound definition because it is more convenient in our proofs. 

For convenience, we also extend these biclique-related terminologies to concept classes $\C \subseteq 2^{X}$, where the graph $G$ is define as $(X, \C, E)$ where $(x, c) \in E$ iff $c(x) = 1$.

\subsection{Gap Problems}
\label{subsec:gap-problems}

Gap problems we consider in this work (e.g. \Cref{prob:gap-biclique,prob:gap-exp-biclique,prob:gap-vcldim}) belong to the class of \emph{promise problems}. (See e.g.~\cite{Goldreich06a} for a more thorough treatment on the topic.) A promise problem consists of two disjoint languages $L_{YES}$ and $L_{NO}$. A randomized algorithm $\A$ is said to solve the promise problem $(L_{YES}, L_{NO})$ iff
\begin{itemize}
\item for all inputs $x \in L_{YES}$, $\Pr[\A(x) = YES] \geq 2/3$, and,
\item for all inputs $x \in L_{NO}$, $\Pr[\A(x) = NO] \geq 2/3$.
\end{itemize}
Note that the algorithm are allowed to output arbitrarily for input $x \notin L_{YES} \cup L_{NO}$.

We say that an algorithm is a randomized reduction from a promise problem $(L_{YES}, L_{NO})$ to $(L'_{YES}, L'_{NO})$ if it takes in $x$ and outputs $x'$ such that
\begin{itemize}
\item If $x \in L_{YES}$, then $\Pr[x' \in L'_{YES}] \geq 2/3$.
\item If $x \in L_{NO}$, then $\Pr[x' \in L'_{NO}] \geq 2/3$.
\end{itemize}

It is simple to see that if the reduction runs in $F(n)$ time (which also implies that it products $x'$ of size $N \leq F(n)$) and there is a randomized algorithm for solving $(L'_{YES}, L'_{NO})$ in $T(N)$ time, then there is a randomized algorithm for solving $(L_{YES}, L_{NO})$ in $O(T(F(n)))$ time.

Since most of the biclique hardness results we use are through randomized reductions, we will henceforth drop the ``randomized'' prefix and assume that any discussions related to promise problems are for randomized algorithms and reductions.

\subsection{Exponential Time Hypotheses}

Given a 3CNF formula $\Phi$, we let $\SAT(\Phi)$ denote the fraction of clauses that can be simultaneously satisfied by an assignment. A degree of variable with respect to a formula $\Phi$ is defined as the number of clauses it appears in. Furthermore, we define the $(1, 1 - \mu)$-Gap 3SAT problem to be the problem of distinguishing between (YES) $\SAT(\Phi) = 1$, and (NO) $\SAT(\Phi) < 1 - \mu$.

ETH and Gap-ETH can be formulated as follows.

\begin{assumption}[Exponential Time Hypothesis (ETH)~\cite{ImpagliazzoP01,ImpagliazzoPZ01}]
No $2^{o(N)}$-time algorithm can decide whether any given $N$-variable 3CNF formula $\Phi$ is satisfiable (i.e. $\SAT(\Phi) = 1$). 
\end{assumption}

\begin{assumption}[Gap-Exponential Time Hypothesis (Gap-ETH)~\cite{Dinur16,ManurangsiR17-gap-eth}]
For some constant $d, \mu > 0$, no $2^{o(N)}$-time algorithm can solve the $(1, 1 - \mu)$-Gap 3SAT problem on $N$-variable 3CNF formulae with maximum degree at most $d$.
\end{assumption}

The nearly-linear size PCP of~\cite{Dinur07} gives a lower bound for Gap 3SAT problem under ETH that is slightly weaker than that of Gap-ETH:

\begin{theorem}[Nearly-Linear Size PCP~\cite{Dinur07}] \label{thm:pcp}
Assuming ETH, there exists constant $\upsilon, d, \mu > 0$ such that no $2^{o(N/(\log N)^{\upsilon})}$-time algorithm can solve the $(1, 1 - \mu)$-Gap 3SAT problem on $N$-variable 3CNF formulae with maximum degree at most $d$.
\end{theorem}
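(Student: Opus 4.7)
My plan is to compose the Sparsification Lemma with Dinur's nearly-linear size PCP and then chase the parameters. First, I would invoke the Sparsification Lemma of Impagliazzo--Paturi--Zane to upgrade ETH to its bounded-degree form: for some absolute constant $d_0$, no $2^{o(N)}$-time algorithm can decide satisfiability of $N$-variable 3CNF formulae of maximum degree at most $d_0$. This step is standard and blows up both the variable count and the formula size by only a constant factor, so it preserves the $2^{o(N)}$ lower bound.

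Next, I would feed any such bounded-degree satisfiability instance through the nearly-linear size PCP reduction of~\cite{Dinur07}. That reduction transforms an input 3CNF of size $M$ into a new 3CNF $\Phi'$ of size $M \cdot (\log M)^{O(1)}$ with the following gap: if the input is satisfiable then $\SAT(\Phi') = 1$, and otherwise $\SAT(\Phi') \leq 1 - \mu_0$ for a universal constant $\mu_0 > 0$. Starting from a bounded-degree instance on $N$ variables (hence of size $O(N)$), the output has $N' \leq N \cdot (\log N)^c$ variables for some constant $c > 0$. A standard variable-replication gadget (replace each occurrence of a high-degree variable by a fresh copy and wire the copies together with equality clauses arranged in a cycle) can then be applied to bring the maximum degree down to some absolute constant $d$; this preserves completeness exactly and decreases the soundness gap by at most a constant factor, which I would absorb into a new constant $\mu > 0$.

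Finally, I would argue by contradiction. Suppose that for every constant $\upsilon > 0$ there existed a $2^{o(N'/(\log N')^{\upsilon})}$-time algorithm for $(1,1-\mu)$-Gap 3SAT on $N'$-variable degree-$d$ instances. Plugging in the bound $N' \leq N(\log N)^{c}$ together with $\log N' = \Theta(\log N)$, the composition with the reduction above would solve bounded-degree 3SAT in time
\[
2^{o\bigl(N(\log N)^{c}/(\log N)^{\upsilon}\bigr)} \;=\; 2^{o\bigl(N(\log N)^{c-\upsilon}\bigr)}.
\]
Choosing $\upsilon := c+1$ makes the exponent $o(N/\log N) = o(N)$, contradicting the bounded-degree form of ETH. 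Hence some finite constant $\upsilon$ witnesses the conclusion.

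The main technical content is really packaged inside the cited Dinur construction: achieving the blowup $M \cdot (\log M)^{O(1)}$ (as opposed to the weaker $M^{1+o(1)}$ blowup produced by more elementary gap-amplification proofs) is precisely what drives the $(\log N)^{\upsilon}$ factor in the exponent. Everything outside of that—sparsification and post-PCP degree reduction—is routine, so the hard part of the proof is not in this write-up at all but rather in quoting~\cite{Dinur07} correctly and lining up the constants $c$ and $\upsilon$ so that a single choice of $\upsilon$ works.
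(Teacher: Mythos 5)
The paper does not prove \Cref{thm:pcp}; it states the result with an attribution to \cite{Dinur07} and uses it as a black box, so there is no in-paper argument to compare against. Your sketch unpacks that citation via the standard pipeline (sparsify, apply the nearly-linear PCP, reduce degree, chase parameters), and the final parameter chase is correct: from $N'\le N(\log N)^c$ and $\log N'=\Theta(\log N)$, taking $\upsilon:=c+1$ yields a $2^{o(N/\log N)}=2^{o(N)}$ algorithm for sparse 3SAT, contradicting ETH.

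Two points need attention. The Sparsification Lemma does not ``blow up the variable count and the formula size by only a constant factor'': it rewrites an $N$-variable instance as a disjunction of up to $2^{\epsilon N}$ subformulas, each with $O_\epsilon(N)$ clauses, and the $2^{o(N)}$ lower bound for sparse / bounded-degree 3SAT under ETH follows by iterating over that disjunction for a suitably chosen $\epsilon\to 0$. The conclusion you invoke is standard, but the mechanism you describe is wrong. More substantively, the equality-cycle gadget is a genuine gap in a gap-preserving argument: if a variable occurs $k$ times and its $k$ replicas are wired into a cycle, an adversary can set a constant fraction of the replicas to the wrong value while violating only $O(1)$ of the $k$ equality clauses, so the unsatisfiability fraction can drop by a $\Theta(1/k)$ factor rather than a constant. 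For superconstant $k$ this destroys the constant gap $\mu$, and in the regime you are in (polylogarithmic blowup) you cannot rule this out a priori. The standard fix is to replace the cycle by a constant-degree expander over the replicas; alternatively, and more simply, you can drop the degree-reduction step entirely by observing that Dinur's construction already outputs a constraint graph of bounded degree, so the resulting 3CNF has bounded degree up to the usual constant-factor translation.
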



\subsection{Planted Clique Hypotheses}

Let $\cG(N, p)$ denote the Erdos-Renyi random graph distribution where there are $N$ vertices and an edge exists between each pair of vertices with probability $p$. Furthermore, let $\cG(N, p, \kappa)$ denote the distribution of a graph sampled from $\cG(N, p)$ but afterwards get planted with a Clique of size $\kappa$.

\begin{assumption}[Planted Clique Hypothesis (e.g.~\cite{Karp76,Jerrum92})]
There exists $\delta > 0$ such that no $N^{O(1)}$-time algorithm $\A$ satisfies both of the following:
\begin{itemize}
\item (Completeness) $\Pr_{G \sim \cG(N, p, \lceil n^\delta\rceil)}[\A(G)=1] \geq 2/3$.
\item (Soundness) $\Pr_{G \sim \cG(N, p)}[\A(G)=1] \leq 1/3$.
\end{itemize}
\end{assumption}

\begin{assumption}[Strongish Planted Clique Hypothesis~\cite{ManurangsiRS21}]
There exists $\delta > 0$ such that no $N^{o(\log N)}$-time algorithm $\A$ satisfies both of the following:
\begin{itemize}
\item (Completeness) $\Pr_{G \sim \cG(N, p, \lceil n^\delta\rceil)}[\A(G)=1] \geq 2/3$.
\item (Soundness) $\Pr_{G \sim \cG(N, p)}[\A(G)=1] \leq 1/3$.
\end{itemize}
\end{assumption}

\section{Reducing Biclique to VC/Littlestone's Dimension}
\label{sec:red}

In this section, we present our main reduction (\Cref{thm:main-red}). Before we do so, let us describe the high-level overview of the reduction. Let $G = (A, B, E)$ be an input instance of $(d_1, d_2)$-Gap Exponential Biclique. Perhaps the simplest reduction one can attempt is to simply let the class $\C$ be corresponding to the adjacency matrix of $G$, i.e. $X = A, \C = B$ and $c(a) = 1$ iff $(a, c) \in E$. 

This simple reduction actually ``almost works''. Specifically, the soundness (i.e. NO case) actually holds. This can easily be seen for VC Dimension, because if $\C|_S$ consists of all boolean functions on $S$, then it must contain a $K_{\lfloor  |S|/2 \rfloor, 2^{\lfloor  |S|/2 \rfloor}}$. This means that in the NO case, the VC Dimension must be less than $2d_2$. Below, in \Cref{lem:littlestone-from-biclique}, we show that the same bound holds even against Littlestone's Dimension.

While the soundness holds, the completeness (i.e. YES case) of the reduction clearly fails: even if $G$ is a complete bipartite graph, the corresponding $\C$ has VC Dimension of just zero. The problem here lies in the fact that ``there are too many ones'' in the adjacency matrix. It turns out that this is very simple to fix: just flip each entry in the adjacency matrix to zero w.p. 1/2. 
This means that the submatrix induced by the biclique is now a random matrix; it is easy to show that such a matrix has high VC Dimension with large probability (\Cref{lem:vc-random}).
This completes the high-level overview of the reduction.


\subsection{Some Helpful Lemmas}
\label{subsec:helpful-lemmas}

Before we formalize the reduction, let us state a few necessary lemmas. First is the aforementioned bound for the Littlestone's Dimension for biclique-free concept class, which will be used in the soundness proof.

\begin{lemma} \label{lem:littlestone-from-biclique}
Let $t \in \N$ and $\C$ be any $K_{t, 2^t}$-free concept class. Then, we must have $\Ldim(\C) < 2t$. 
\end{lemma}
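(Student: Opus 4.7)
The plan is to prove the contrapositive: assuming $\Ldim(\C) \geq 2t$, we will exhibit $t$ distinct domain elements and $2^t$ distinct concepts in $\C$ all equal to $1$ on those elements, which is exactly a $K_{t, 2^t}$ inside $\C$, contradicting $K_{t, 2^t}$-freeness. The first step is to invoke the equivalence (noted just before the lemma, following~\cite{Littlestone87}) between Littlestone's dimension and the depth of the deepest \emph{mistake tree} realizable by $\C$: a complete binary tree with internal nodes labeled by elements of $X$, such that for each root-to-leaf path the labeling ``from an $x$-labeled node, the $b$-edge asserts $c(x) = b$'' is realized by some $c \in \C$. By the WLOG assumption on distinct queries from the mistake-bound definition, the labels along any root-to-leaf path can be taken to be pairwise distinct.

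Starting from a depth-$2t$ mistake tree for $\C$, we walk from the root always taking the $1$-edge, stopping after $t$ steps at some node $v$. Let $x_1, \ldots, x_t$ be the pairwise distinct labels of the nodes visited by this walk. The subtree rooted at $v$ still has depth $t$, and hence $2^t$ leaves. For each such leaf $\ell$, the realizing concept $c_\ell \in \C$ satisfies $c_\ell(x_i) = 1$ for every $i \in [t]$, since the first $t$ edges of its root-to-$\ell$ path are exactly the $1$-edges chosen by the walk. Any two distinct leaves $\ell \ne \ell'$ diverge at some internal node inside the subtree (say, labeled $y$) along a $0/1$ split, so $c_\ell$ and $c_{\ell'}$ disagree on $y$ and are therefore distinct concepts in $\C$. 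The $2^t$ concepts obtained in this way, together with the $t$ elements $x_1, \ldots, x_t$, form the desired $K_{t, 2^t}$.

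The only delicate point is the budgeting: we spend $t$ of the $2t$ tree levels on locking in the $X$-side of the biclique, and we need the remaining $t$ levels to supply $2^t$ distinct concepts via the leaves, and these two budgets just balance when the mistake-tree depth is $2t$. Nothing else looks troublesome. If one preferred to avoid invoking the mistake-tree equivalence explicitly and stay entirely inside the mistake-bound formalism, the walk can be replaced by an iterative use of the recursion $\Ldim(\C) = 1 + \max_{x \in X} \min_{b \in \{0,1\}} \Ldim(\C_{x,b})$ to peel off distinct $x_1, \ldots, x_t$ with $\Ldim(\C_{x_1, 1; \ldots; x_t, 1}) \geq t$ (distinctness of the $x_i$'s follows because restricting an already-$1$-fixed coordinate to $b = 0$ empties the class, so the inner $\min$ there cannot be $\geq t$), and then to invoke the Halving bound $|\C'| \geq 2^{\Ldim(\C')}$ on $\C' := \C_{x_1, 1; \ldots; x_t, 1}$, which by $K_{t, 2^t}$-freeness is forced to contain fewer than $2^t$ concepts, contradicting $\Ldim(\C') \geq t$.
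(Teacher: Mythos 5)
Your proof is correct, and it takes a genuinely different route from the paper's. The paper works directly with the mistake-bound definition (which it explicitly states it prefers) and exhibits an explicit two-stage online learner: output $0$ while the version space $\C[(x_1,y_1),\dots,(x_{i-1},y_{i-1})]$ still has size at least $2^t$, then switch to the Halving algorithm once it shrinks below $2^t$. The first stage makes at most $t$ mistakes, since $t{+}1$ first-stage mistakes would already exhibit $t$ domain points all labeled $1$ by the $\geq 2^t$ surviving concepts, i.e.\ a $K_{t,2^t}$; the second stage makes fewer than $t$ mistakes by halving. You instead argue contrapositively, assuming $\Ldim(\C) \geq 2t$ and extracting the $K_{t,2^t}$ via a depth-$2t$ mistake tree (or, in your alternate version, via the recursion $\Ldim(\C) = 1 + \max_x \min_b \Ldim(\C_{x,b})$ followed by the Halving bound $|\C'|\geq 2^{\Ldim(\C')}$). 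The underlying combinatorics is the same balanced budget of $t$ levels to pin down the $X$-side plus $t$ levels to generate $2^t$ concepts, and your distinctness argument for the $x_i$'s is sound, so the proofs are tightly related; but the paper's version buys something your contrapositive does not: it produces a concrete, computationally efficient online learner, a fact the paper leans on in its conclusion when discussing the open question of efficiently realizing near-optimal mistake bounds. Your tree-based phrasing also relies on the mistake-tree characterization, which the paper deliberately sidesteps, though the ``distinct labels along a root-to-leaf path'' point you flag is indeed automatic from realizability of every leaf rather than a separate WLOG.
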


To prove the above lemma, it is convenient to define $\C[{(x_1, y_1), \dots, (x_T, y_T)}]$ for any sequence $(x_1, y_1), \dots, (x_T, y_T)$ as the class of concepts in $\C$ that is consistent with the sequence. More formally, $$\C[{(x_1, y_1), \dots, (x_T, y_T)}] := \{c \in \C \mid \forall i \in [T], c(x_i) = y_i\}.$$

\begin{proof}[Proof of \Cref{lem:littlestone-from-biclique}]
To give the desired upper bound on $\Ldim(\C)$, it suffices to give an online algorithm $\A$ that achives a mistake bound of less than $2t$ on $\C$. After receiving $(x_1, y_1), \dots, (x_{i - 1}, y_{i - 1})$ and $x_i$, the algorithm gives prediction $z_i$ based on the following rule:
\begin{enumerate}
\item If $|\C[{(x_1, y_1), \dots, (x_{i - 1}, y_{i - 1})}]| \geq 2^t$, the algorithm outputs 0.
\item Otherwise, the algorithm outputs $\argmax_{z \in \{0, 1\}} |\C[{(x_1, y_1), \dots, (x_{i - 1}, y_{i - 1}), (x_i, z)}]|$ (tie broken arbritrarily).
\end{enumerate}
Note that since $\C[{(x_1, y_1), \dots, (x_{i - 1}, y_{i - 1})}]$ is non-increasing, the algorithm will first be in the first stage (in the first case) and then moves to the second stage (the second case).

Since $\C$ is $K_{t, 2^t}$-free, $\A$ will make at most $t$ mistakes in the first stage\footnote{Otherwise, $x_1, \dots, x_{i-1}$ and the points in $\C[{(x_1, y_1), \dots, (x_{i - 1}, y_{i - 1})}]$ would induce $K_{t, 2^t}$.}. Once $\A$ reaches the second stage, the size $|\C[{(x_1, y_1), \dots, (x_{i - 1}, y_{i - 1})}]|$ reduces by a factor of (at least) two with each mistake. Since it starts off with value less than $2^t$, $\A$ must make less than $t$ mistakes in the second stage. Therefore, in total $\A$ will make less than $2t$ mistakes.
\end{proof}





The second lemma we need, which will be used in the completeness argument, is that a class consisting of independent random concepts have large VC Dimensions:

\begin{lemma} \label{lem:vc-random}
Let $S$ be any $t$-size set and $\C \subseteq \{0, 1\}^S$ be a concept class of size $2^t$, where each concept is independently uniformly sampled at random from $\{0, 1\}^S$. If $t \geq 4$, then $\VC(\C) \geq t/2$ with probability at least 2/3.
\end{lemma}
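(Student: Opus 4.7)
The plan is to exhibit a single subset $T \subseteq S$ of size $\lceil t/2 \rceil$ that, with probability at least $2/3$, is shattered by $\C$; since shattering such a $T$ already forces $\VC(\C) \geq \lceil t/2 \rceil \geq t/2$, this is enough. So I fix any such $T$ once and for all and ignore the other elements of $S$ --- the restriction of each random concept to $T$ is independent of what happens outside $T$.

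The first real step is to bound, for a single target labeling $f \in \{0,1\}^T$, the probability that no concept in $\C$ restricts to $f$. Because each of the $2^t$ concepts is uniform on $\{0,1\}^S$, its restriction to $T$ is uniform on $\{0,1\}^T$ and equals $f$ with probability exactly $2^{-|T|}$; by the independence of the $2^t$ samples, the probability that none of them realizes $f$ is $(1-2^{-|T|})^{2^t}$. A union bound over the $2^{|T|}$ choices of $f$, combined with the elementary estimate $(1-x)^n \leq e^{-xn}$, then yields
\begin{equation*}
\Pr[\,T \text{ is not shattered}\,] \;\leq\; 2^{|T|}\exp\!\left(-2^{t-|T|}\right) \;=\; 2^{\lceil t/2\rceil}\exp\!\left(-2^{\lfloor t/2\rfloor}\right).
\end{equation*}

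The last step is just a numerical check that this bound is at most $1/3$ for every $t \geq 4$. Writing $v := 2^{\lfloor t/2\rfloor}$ and using $2^{\lceil t/2\rceil} \leq 2v$, the right-hand side is at most $2v\,e^{-v}$; since $v \mapsto v e^{-v}$ is decreasing on $v \geq 1$, the worst case in the range $t \geq 4$ is $v = 4$, where $2v e^{-v} = 8 e^{-4}$, comfortably below $1/3$. I do not expect any genuine obstacle: the whole argument is a one-line union bound plus a standard $(1-x)^n \leq e^{-xn}$ estimate. The only mild subtlety worth noting is that the hypothesis $t \geq 4$ in the lemma is tight for this argument --- taking $t = 3$ gives a union-bound value of $4 \cdot (3/4)^{8} \approx 0.40$, which slightly exceeds $1/3$, and this is precisely why the constant $4$ appears in the statement.
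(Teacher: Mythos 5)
Your argument is correct and essentially identical to the paper's: fix a single subset of size $\lceil t/2\rceil$, union-bound over its $2^{\lceil t/2\rceil}$ labelings, bound the per-labeling miss probability by $(1-2^{-\lceil t/2\rceil})^{2^t} \leq e^{-2^{\lfloor t/2\rfloor}}$, and check the resulting quantity is below $1/3$ for $t\geq 4$. The only difference is that you spell out the final numerical check (and the tightness at $t=3$) which the paper simply asserts.
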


\begin{proof}
Consider any subset $S' \subseteq S$ of size $t' := \lceil t/2 \rceil$. We have
\begin{align*}
\Pr[\VC(\C) < t/2] 
&\leq \Pr[\exists f \in \{0,1\}^{S'}, f \notin \C|_{S'}] \\
&\leq \sum_{f \in \{0, 1\}^{S'}} \Pr[f \notin \C|_{S'}] \\
&= \sum_{f \in \{0, 1\}^{S'}} (1 - 1/2^{t'})^{|\C|} \\
&\leq 2^{t'} \cdot e^{-|\C|/2^{t'}} = 2^{\lceil t/2 \rceil} \cdot e^{-2^{\lfloor t/2 \rfloor}} < 1/3,
\end{align*}
where the second inequality is due to the union bound and the first equality is due to the fact that each function is $\C$ is picked independently u.a.r. from $\{0, 1\}^S$.
\end{proof}

\subsection{The Reduction}

We are now ready to prove \Cref{thm:main-red}. 

\begin{proof}[Proof of \Cref{thm:main-red}]
We will provide a reduction from the $(d_1, d_2)$-Gap Exponential Biclique problem to the $(d_1/2, 2d_2)$-VC-Ldim problem such that the problem size remains the same, which implies the theorem statement. (See discussion in \Cref{subsec:gap-problems}.)

Given an instance $G = (A, B, E)$ for the $(d_1, d_2)$-Gap Exponential Biclique Problem.
Let $X = A$ and, for every $b \in B$, create a concept $c_b$ in $\C$ where $c_b(a)$ for each $a \in A$ is defined as follows.
\begin{itemize}
\item If $(a, b) \notin E$, let $c_b(a) = 0$.
\item If $(a, b) \in E$, let $c_b(a) \in \{0, 1\}$ be drawn uniformly at random.
\end{itemize}

\noindent \textbf{(Completeness)}
We may assume w.l.o.g. that $d_1 \geq 4$; otherwise, the problem can be trivially solved in polynomial time. 
Suppose that $G$ contains $K_{d_1, 2^{d_1}}$, i.e. there exists $S \subseteq A, T \subseteq B$ of sizes $d_1, 2^{d_1}$ respectively such that $(s, t) \in E$ for all $s \in S, t \in T$. This, together with the definition of $\C$, means that $c_t|_S$ for each $t \in T$ is drawn independently uniformly at random among $\{0, 1\}^S$. Thus, we may apply \Cref{lem:vc-random} to conclude that $\VC(\C) \geq \VC(\C|_S)$ is at least $d_1/2$ with probability 2/3.

\noindent \textbf{(Soundness)} If $G$ is $K_{d_2, 2^{d_2}}$-free, then $\C$ is also $K_{d_2, 2^{d_2}}$-free (because the corresponding graph of $\C$ is a subgraph of $G$). Therefore, \Cref{lem:littlestone-from-biclique} ensures that $\Ldim(\C_i) \leq 2d_2$.
\end{proof}

\section{(Gap-)ETH-Hardness of Exponential Biclique}

Having presented the reduction, we will next prove concrete hardness results, starting with those based on Gap-ETH and ETH. To prove these, we start by making a straightforward observation: there is the trivial reduction from $(d_1, d_2)$-Gap Biclique to $(\lfloor \log d_1 \rfloor, d_2)$-Gap Exponential Biclique (i.e. keeping the input graph exactly the same). This yields the following result.

\begin{observation} \label{obs:balanced-to-unbalanced}
If there exists an $T(n)$-time algorithm for the $(\lfloor \log d_1 \rfloor, d_2)$-Gap Exponential Biclique Problem, there exists an $T(n)$-time algorithm for the $(d_1, d_2)$-Gap Biclique Problem.
\end{observation}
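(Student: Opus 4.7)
The plan is to use the identity reduction: given an input bipartite graph $G = (A, B, E)$ for $(d_1, d_2)$-Gap Biclique, simply feed the same graph $G$ as an instance of $(\lfloor \log d_1 \rfloor, d_2)$-Gap Exponential Biclique and return whatever the assumed algorithm answers. Since no computation is done on the graph, the running time overhead is only $O(1)$ (or $O(n)$ to recopy), which preserves the $T(n)$ bound. The entire content of the proof is therefore verifying completeness and soundness of this trivial reduction.

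For completeness, I would assume $G$ contains $K_{d_1, d_1}$, witnessed by sets $S \subseteq A, T \subseteq B$ with $|S| = |T| = d_1$. Setting $t := \lfloor \log d_1 \rfloor$, we have $t \leq d_1$ and $2^t \leq d_1$, so I can pick any $S' \subseteq S$ of size $t$ and any $T' \subseteq T$ of size $2^t$; since $S, T$ induce a complete bipartite subgraph, so do $S', T'$, showing $G$ contains $K_{t, 2^t} = K_{\lfloor \log d_1 \rfloor, 2^{\lfloor \log d_1 \rfloor}}$, which is exactly the YES condition of Gap Exponential Biclique.

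For soundness, I would prove the contrapositive: if $G$ contains $K_{d_2, 2^{d_2}}$, witnessed by $S' \subseteq A, T' \subseteq B$ of sizes $d_2$ and $2^{d_2}$ respectively, then since $2^{d_2} \geq d_2$ (assuming the trivial case $d_2 \geq 1$) I can pick any $d_2$-subset $T'' \subseteq T'$ and conclude that $S', T''$ induce $K_{d_2, d_2}$ in $G$, so $G$ is not $K_{d_2, d_2}$-free. Equivalently, if $G$ is $K_{d_2, d_2}$-free it is $K_{d_2, 2^{d_2}}$-free, which is the NO condition of Gap Exponential Biclique.

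There is no real obstacle here; the only thing worth noting is the asymmetry of $K_{a,b}$ in the definition used in this paper (\emph{$a$ vertices from $A$, $b$ vertices from $B$}), and the fact that in the Gap Exponential Biclique problem the \emph{smaller} side has $t$ vertices while the larger has $2^t$. Both of these align correctly: in the YES case we are shrinking from $d_1$ to $t = \lfloor \log d_1 \rfloor$ on the $A$-side and from $d_1$ to $2^t \leq d_1$ on the $B$-side, and in the NO case we are shrinking from $2^{d_2}$ to $d_2$ on the $B$-side. Once this parameter bookkeeping is checked, the observation follows immediately.
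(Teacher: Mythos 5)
Your proof is correct and matches the paper's approach exactly: the paper explicitly calls this the "trivial reduction ... keeping the input graph exactly the same" and leaves the parameter bookkeeping implicit, which is precisely what you spell out. The completeness and soundness checks (using $\lfloor \log d_1 \rfloor \leq d_1$, $2^{\lfloor \log d_1 \rfloor} \leq d_1$, and $d_2 \leq 2^{d_2}$) are all correct.
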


We will also need the following reduction from Gap 3SAT to Gap Biclique due to \cite{ChalermsookCKLM20} (which is in turn a modification of that from~\cite{Manurangsi17}).

\begin{theorem}[{\cite[Theorem 5.13]{ChalermsookCKLM20}}] \label{thm:red-sat-biclique}
For any constants $d, \mu > 0$, there exists a constant $\gamma > 0$ and a reduction from $(1, 1 - \mu)$-Gap 3SAT problem on $n$-variable 3CNF formulae with maximum degree at most $d$ to $(n^{\gamma / \sqrt{r}}, r)$-Gap Biclique on graphs of size $n$ for any sufficiently large $r$. Furthermore, $n = 2^{\Theta_{d,\mu}(N / \sqrt{r})}$ and the reduction runs in $n^{O(1)}$ time.
\end{theorem}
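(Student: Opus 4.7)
The plan is to follow the biclique-hardness reduction from Gap 3SAT developed by Manurangsi~\cite{Manurangsi17} and refined in~\cite{ChalermsookCKLM20}. The reduction is a two-scale construction: one scales the block structure of a partition of the variable set, the other scales how much of that partition each vertex in the biclique instance encodes, and balancing the two scales produces the square-root dependence $n^{\gamma/\sqrt{r}}$ versus $r$.

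Concretely, I would partition the $N$ variables of the input 3CNF $\Phi$ into $k := \Theta(\sqrt{r})$ blocks $V_1,\ldots,V_k$ of size $N/k$ each. Vertices in the bipartite biclique instance $G = (A,B,E)$ are small \emph{bundles} of (block index, partial assignment) pairs; this already gives $n = 2^{\Theta_{d,\mu}(N/\sqrt{r})}$ total vertices and the reduction runs in $n^{O(1)}$ time by direct enumeration. I place an edge between $\alpha \in A$ and $\beta \in B$ iff the combined partial assignment $\alpha \cup \beta$ is consistent on shared variables and satisfies every clause of $\Phi$ whose support lies inside the combined block union. For completeness, any satisfying assignment $\alpha^{\star}$ of $\Phi$ induces a biclique by taking $S$ and $T$ to be the sets of bundles that agree with $\alpha^{\star}$; a straightforward counting argument over the number of compatible bundles gives $|S|,|T| \geq n^{\gamma/\sqrt{r}}$ for a suitable constant $\gamma = \gamma(d,\mu) > 0$. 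Because the degree bound $d$ controls the number of clauses per variable, the satisfaction-check per edge is only a local test on $O_d(N/k)$ clauses, keeping both the edge-count and the soundness analysis manageable.

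The main obstacle is soundness: I need to show that if $G$ contains $K_{r,r}$ then $\Phi$ must be satisfiable on more than a $1-\mu$ fraction of its clauses, contradicting the Gap 3SAT hypothesis. The natural tool is an \emph{agreement / plurality} argument. Given a biclique with sides $S,T$ of size $r$, for each block $V_i$ one extracts a ``plurality'' partial assignment $\sigma_i$ appearing frequently among the bundles of $S$ that touch $V_i$, and stitches the $\sigma_i$'s into a single global assignment $\sigma$. Using that every pair in $S \times T$ is an edge, one argues via averaging that only a small fraction of clauses can be violated by $\sigma$; the refinement of~\cite{ChalermsookCKLM20} over~\cite{Manurangsi17} is precisely to sharpen this averaging step so that the $\sqrt{r}$ exponent in the gap can be retained while $r$ is taken as small as an absolute constant. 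Tracking how $\gamma$ depends on $d$ and $\mu$ through this agreement argument is the crux of the proof, and is where the bulk of the technical work lies.
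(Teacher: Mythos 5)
The statement under review is a theorem cited from \cite{ChalermsookCKLM20}, so the present paper gives no proof of it; your task was to reconstruct one, and the reconstruction contains a genuine numerical gap in the completeness step. Recall what the theorem demands: the output graph has $n = 2^{\Theta(N/\sqrt{r})}$ vertices, and in the YES case it must contain a $K_{q_1,q_1}$ with $q_1 = n^{\gamma/\sqrt{r}} = 2^{\Theta(\gamma N/r)}$ for a \emph{constant} $\gamma = \gamma(d,\mu)$. For the parameter ranges the present paper actually plugs in (e.g.\ $r = \Theta(\sqrt{N})$), this means $q_1 = 2^{\Theta(\sqrt{N})}$, an exponentially large biclique, extracted from a single satisfying assignment.

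Your construction cannot produce this. You partition the variables into $k = \Theta(\sqrt{r})$ \emph{disjoint} blocks and let vertices be small bundles of (block, partial assignment) pairs. For $n = 2^{\Theta(N/\sqrt{r})}$ to hold with blocks of size $N/k = \Theta(N/\sqrt{r})$, a bundle can cover only $\ell = O(1)$ blocks. But then the number of bundles compatible with a fixed satisfying assignment $\alpha^\star$ is at most $\binom{k}{\ell} = \binom{\Theta(\sqrt{r})}{O(1)} = r^{O(1)}$, since the partial assignment inside each chosen block is forced to agree with $\alpha^\star$. That is polynomial in $r$, whereas $q_1 = 2^{\Theta(N/r)}$ is super-polynomial in $r$ (indeed exponential in $N$ up to the $1/r$ factor). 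So the ``straightforward counting argument over the number of compatible bundles'' you invoke cannot give $|S|,|T| \geq n^{\gamma/\sqrt{r}}$ for any constant $\gamma>0$; it is off by an exponential amount. Worse, with disjoint blocks any biclique has at most one vertex per (ordered) bundle of block indices per side, which caps the biclique size by the same polynomial-in-$r$ quantity and collapses the whole reduction.

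The missing idea is that the variable subsets a vertex encodes must \emph{overlap}, not partition the variable set. The known constructions (Manurangsi's biclique hardness and its refinement in \cite{ChalermsookCKLM20}) take each vertex to be a partial assignment on a subset of roughly $N/\sqrt{r}$ variables drawn from a large (e.g.\ random or disperser-style) family of such subsets, with an edge when two partial assignments are consistent on their intersection and jointly satisfy the clauses enclosed by their union. Because the family can be of size $2^{\Theta(N/r)}$ while keeping $n = 2^{\Theta(N/\sqrt{r})}$, a single satisfying assignment now induces one compatible vertex per subset, yielding the exponentially large YES biclique. Soundness is a birthday-repetition argument: any $r$ pairwise-consistent partial assignments over such subsets glue (via their overlaps) into a single assignment, and since the $\Theta(r^2)$ pairwise unions of $\Theta(N/\sqrt{r})$-sized subsets collectively cover a constant fraction of the degree-$d$ 3CNF's clauses, this assignment satisfies more than a $1-\mu$ fraction of clauses, a contradiction. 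The plurality/agreement ingredient you reach for does appear in spirit (one reads an assignment off the intersection structure), but it operates on overlapping subsets, not on a disjoint block partition, and it is the overlap that makes both the completeness count and the soundness coverage work.
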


\subsection{Gap-ETH-Hardness}

We will now prove the Gap-ETH hardness results (\Cref{thm:gap-eth-inapprox,thm:gap-eth-time}) by plugging in an appropriate value of $r$ in \Cref{thm:red-sat-biclique} and then apply our reduction (\Cref{thm:main-red}) afterwards. For the $o(\log n)$ factor inapproximability result, we set $r$ to be slowly growing function of $N$, as formalized below.

\begin{proof}[Proof of \Cref{thm:gap-eth-inapprox}]
Let $\alpha(n)$ denote any function\footnote{Henceforth, we assume w.l.o.g. that the approximation ratio (denoted by $\alpha(n)$) is a non-decreasing function of the problem size $n$; this is w.l.o.g. as otherwise we can always consider $\tilde{\alpha}(n) := \max_{n' \leq n} \alpha(n)$ instead.} such that $\alpha(n) = o(\log n)$ (i.e. $\lim_{n \to \infty} \frac{\alpha(n)}{\log n} = 0$).  We will prove below that there is no polynomial-time algorithm for $(4 \alpha(n) d, d)$-Gap Exponential Biclique, where $d$ is a parameter to be set below. By \Cref{thm:main-red}, this immediately implies that there is no polynomial-time $\alpha(n)$-approximation algorithm for VC dimension or Littlestone's dimension.

Let $\Phi$ be the input to the $(1, 1 - \mu)$-Gap 3SAT problem. Let $r = \lceil \sigma \cdot \sqrt{N / \alpha(2^N)} \rceil$ where $\sigma > 0$ is a sufficiently small constant to be chosen later. Note here that $r = \omega(1)$ because $\alpha(2^N) = o(\log(2^N)) = o(N)$. We apply the reduction in \Cref{thm:red-sat-biclique} to produce an input graph $G'$ of size $n = 2^{\Theta(N/\sqrt{r})} \leq 2^{o(n)}$ for $(n^{\gamma / \sqrt{r}}, r)$-Gap Biclique. \Cref{obs:balanced-to-unbalanced} also implies that this is an instance for $(\lfloor\log(n^{\gamma / \sqrt{r}})\rfloor, r)$-Gap Exponential Biclique. Let $t_1 = \lfloor\log(n^{\gamma / \sqrt{r}})\rfloor = \Theta(N/r)$ and $d = t_2 = r$. From our choice of $r$, we have $t_1 / t_2 = \Theta(1/\sigma^2) \cdot \alpha(2^N)$. Thus, by taking $\sigma$ to be a sufficiently small constant, the ratio between $t_1, t_2$ is at least $4\alpha(2^N)$, which is in turn no less than $4\alpha(n)$ for any sufficiently large $N$. 

Thus, if there were a polynomial-time algorithm for $(4\alpha(n)d, d)$-Gap Exponential Biclique Problem, then we could apply the above reduction and run it on the resulting graph $G'$ to solve the $(1, 1 - \mu)$-Gap 3SAT problem in time $2^{o(N)}$. This would violate Gap-ETH.
\end{proof}

For \Cref{thm:gap-eth-time}, we instead choose $r = \Theta(\sqrt{N})$.

\begin{proof}[Proof of \Cref{thm:gap-eth-time}]
Let $C > 1$ be any constant. We will prove below that there is no $n^{o\left({(\log n)^{1/3}}\right)}$-time algorithm for $(4 C d, d)$-Gap Exponential Biclique, where $d$ is a parameter to be set below. By \Cref{thm:main-red}, this immediately implies that there is no $n^{o\left({(\log n)^{1/3}}\right)}$-time $C$-approximation algorithm for VC dimension or Littlestone's dimension.

Let $\Phi$ be the input to the $(1, 1 - \mu)$-Gap 3SAT problem. Let $r = \lceil \sigma \cdot \sqrt{N} \rceil$ where $\sigma > 0$ is a sufficiently small constant to be chosen later. We apply the reduction in \Cref{thm:red-sat-biclique} to produce an input graph $G'$ of size $n = 2^{\Theta(N/\sqrt{r})} = 2^{\Theta(N^{3/4})}$ for $(n^{\gamma / \sqrt{r}}, r)$-Gap Biclique. \Cref{obs:balanced-to-unbalanced} also implies that this is an instance for $(\lfloor\log(n^{\gamma / \sqrt{r}})\rfloor, r)$-Gap Exponential Biclique. Let $t_1 = \lfloor\log(n^{\gamma / \sqrt{r}})\rfloor = \Theta(N/r)$ and $d = t_2 = r$. From our choice of $r$, we have $t_1 / t_2 = \Theta(1/\sigma^2)$. Thus, by taking $\sigma$ to be a sufficiently small constant, the ratio between $t_1, t_2$ is at least $4C$ as desired. 

Thus, if there were a $n^{o((\log n)^{1/3})}$-time algorithm for $(4Cd, d)$-Gap Exponential Biclique Problem, then we could apply the above reduction and run it on the resulting graph $G'$ to solve the $(1, 1 - \mu)$-Gap 3SAT problem in time $(2^{\Theta(N^{3/4})})^{o(\log(2^{\Theta(N^{3/4})})^{1/3})} = 2^{o(N)}$. This would violate Gap-ETH.
\end{proof}

\subsection{ETH-Hardness}

The proofs of \Cref{thm:eth-inapprox,thm:eth-time} are nearly identical to those of \Cref{thm:gap-eth-inapprox,thm:gap-eth-time}, except that we have to set $r$ to be larger than the corresponding Gap-ETH-based results by a polylogarithmic in $N$ factor. This is to compensate for the weaker running time lower bound we have under ETH from \Cref{thm:pcp}.

\begin{proof}[Proof of \Cref{thm:eth-inapprox}]
Let $\xi = 4\upsilon$ where $\upsilon$ is the constant from \Cref{thm:pcp}.
Let $\alpha(n)$ denote any function such that $\alpha(n) = o\left(\frac{\log n}{(\log \log n)^\xi}\right)$.  We will prove below that there is no polynomial-time algorithm for $(4 \alpha(n) d, d)$-Gap Exponential Biclique, where $d$ is a parameter to be set below. By \Cref{thm:main-red}, this immediately implies that there is no polynomial-time $\alpha(n)$-approximation algorithm for VC dimension or Littlestone's dimension.

Let $\Phi$ be the input to the $(1, 1 - \mu)$-Gap 3SAT problem. Let $r = \lceil \sigma \cdot \sqrt{N / \alpha(2^N)} \rceil$ where $\sigma > 0$ is a sufficiently small constant to be chosen later. Note here that $r = \omega((\log N)^{2\upsilon})$ because $\alpha(2^N) = o\left(\frac{\log(2^N)}{(\log \log(2^N))^\xi}\right) = o\left(\frac{N}{(\log N)^{\xi}}\right)$. We apply the reduction in \Cref{thm:red-sat-biclique} to produce an input graph $G'$ of size $n = 2^{\Theta(N/\sqrt{r})} \leq 2^{o(N / (\log N)^{\upsilon})}$ for $(n^{\gamma / \sqrt{r}}, r)$-Gap Biclique. \Cref{obs:balanced-to-unbalanced} also implies that this is an instance for $(\lfloor\log(n^{\gamma / \sqrt{r}})\rfloor, r)$-Gap Exponential Biclique. Let $t_1 = \lfloor\log(n^{\gamma / \sqrt{r}})\rfloor = \Theta(N/r)$ and $d = t_2 = r$. From our choice of $r$, we have $t_1 / t_2 = \Theta(1/\sigma^2) \cdot \alpha(2^N)$. Thus, by taking $\sigma$ to be a sufficiently small constant, the ratio between $t_1, t_2$ is at least $4\alpha(2^N)$, which is in turn no less than $4\alpha(n)$ for any sufficiently large $N$. 

Thus, if there were a polynomial-time algorithm for $(4\alpha(n)d, d)$-Gap Exponential Biclique Problem, then we could apply the above reduction and run it on the resulting graph $G'$ to solve the $(1, 1 - \mu)$-Gap 3SAT problem in time $2^{o(N / (\log N)^{\upsilon})}$. From \Cref{thm:pcp}, this would violate ETH.
\end{proof}

\begin{proof}[Proof of \Cref{thm:eth-time}]
Let $\xi = \upsilon$ be the constant from \Cref{thm:pcp}.
Let $C > 1$ be any constant. We will prove below that there is no $n^{o\left(\frac{(\log n)^{1/3}}{(\log \log n)^{\xi}}\right)}$-time algorithm for $(4 C d, d)$-Gap Exponential Biclique, where $d$ is a parameter to be set below. By \Cref{thm:main-red}, this immediately implies that there is no $n^{o\left(\frac{(\log n)^{1/3}}{(\log \log n)^{\xi}}\right)}$-time $C$-approximation algorithm for VC dimension or Littlestone's dimension.

Let $\Phi$ be the input to the $(1, 1 - \mu)$-Gap 3SAT problem. Let $r = \lceil \sigma \cdot \sqrt{N} \rceil$ where $\sigma > 0$ is a sufficiently small constant to be chosen later. We apply the reduction in \Cref{thm:red-sat-biclique} to produce an input graph $G'$ of size $n = 2^{\Theta(N/\sqrt{r})} = 2^{\Theta(N^{3/4})}$ for $(n^{\gamma / \sqrt{r}}, r)$-Gap Biclique. \Cref{obs:balanced-to-unbalanced} also implies that this is an instance for $(\lfloor\log(n^{\gamma / \sqrt{r}})\rfloor, r)$-Gap Exponential Biclique. Let $t_1 = \lfloor\log(n^{\gamma / \sqrt{r}})\rfloor = \Theta(N/r)$ and $d = t_2 = r$. From our choice of $r$, we have $t_1 / t_2 = \Theta(1/\sigma^2)$. Thus, by taking $\sigma$ to be a sufficiently small constant, the ratio between $t_1, t_2$ is at least $4C$ as desired. 

Thus, if there were a $n^{o\left(\frac{(\log n)^{1/3}}{(\log \log n)^{\xi}}\right)}$-time algorithm for $(4Cd, d)$-Gap Exponential Biclique Problem, then we could apply the above reduction and run it on the resulting graph $G'$ to solve the $(1, 1 - \mu)$-Gap 3SAT problem in time $(2^{\Theta(N^{3/4})})^{o\left(\frac{\log(2^{\Theta(N^{3/4})})^{1/3}}{(\log \log(2^{\Theta(N^{3/4})})^{\xi}}\right)} = 2^{o(N / (\log N)^{\upsilon})}$. From \Cref{thm:pcp}, this would violate ETH.
\end{proof}

\section{Hardness from Planted Clique Hypothesis}

We next move on to prove hardness result based on the Planted Clique Hypothesis (\Cref{thm:pc}). We start by recalling that the Planted Clique Hypothesis by itself already implies a fairly strong hardness of Gap Biclique:

\begin{lemma}[Folklore] \label{lem:pc-biclique}
Assuming the Planted Clique Hypothesis, for some constants $\delta \in (0, 1), \zeta > 1$, there is no polynomial-time algorithm for $(\lceil N^{\delta} \rceil, \zeta \lceil \log N\rceil)$-Gap Biclique.
\end{lemma}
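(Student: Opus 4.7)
The plan is to apply the standard bipartite double cover reduction. Given an input graph $G = (V, E)$ on $N$ vertices to the planted clique distinguishing problem, I construct a bipartite graph $H = (A, B, E')$ where $A$ and $B$ are two disjoint copies of $V$ and $(u_A, v_B) \in E'$ iff $\{u, v\} \in E(G)$. This reduction runs in polynomial time and produces a bipartite graph with $2N$ vertices.

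For completeness, suppose $G \sim \cG(N, 1/2, \kappa)$ with $\kappa = \lceil N^\delta \rceil$, so $G$ has a planted clique $K$ of size $\kappa$. Split $K$ arbitrarily into two disjoint halves $K_1, K_2$, each of size $\lfloor \kappa / 2 \rfloor$. For any $u \in K_1$ and $v \in K_2$, $u \neq v$ and $\{u, v\} \in E(G)$, so $(u_A, v_B) \in E'$. Hence $H$ deterministically contains $K_{\lfloor \kappa / 2 \rfloor, \lfloor \kappa / 2 \rfloor}$.

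For soundness, suppose $G \sim \cG(N, 1/2)$. Consider any $k$-subsets $S \subseteq A$ and $T \subseteq B$, and let $S_V, T_V \subseteq V$ be the corresponding subsets of the original vertex set. If $S_V \cap T_V$ contains some $w$, then the edge $(w_A, w_B)$ would require a self-loop $\{w, w\} \in E(G)$, which never occurs in $\cG(N, 1/2)$; so only pairs with disjoint $S_V, T_V$ can form a biclique. In that case the $k^2$ relevant entries of $G$'s adjacency matrix are independent $\mathrm{Bernoulli}(1/2)$'s, so $\Pr[S \text{ and } T \text{ span a biclique}] = 2^{-k^2}$. A union bound over the at most $\binom{N}{k}\binom{N-k}{k} \leq N^{2k}$ disjoint choices yields
\[
\Pr[H \text{ contains } K_{k, k}] \leq 2^{2k \log_2 N - k^2},
\]
which is $o(1)$ whenever $k \geq \zeta \log_2 N$ for any fixed $\zeta > 2$.

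Combining completeness and soundness: a polynomial-time algorithm for $(\lfloor \kappa / 2 \rfloor, \zeta \lceil \log N \rceil)$-Gap Biclique applied to $H$ distinguishes the planted case from the null case with success probability at least $2/3 - o(1)$, contradicting the Planted Clique Hypothesis (after standard amplification to push the constant above $2/3$). Restating the gap in terms of the size $n = 2N$ of $H$, and absorbing the factor of $2$ into a slightly smaller $\delta$ and slightly larger $\zeta$, yields the statement of the lemma. I do not anticipate a genuine technical obstacle: the main check is the union-bound calculation, and the only subtlety is the routine bookkeeping of constants.
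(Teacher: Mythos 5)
Your reduction is the same bipartite double-cover construction the paper uses, and your proof is correct. The one difference is that the paper adds the "diagonal" edges $(a_A, a_B)$ for $a \in V$, which makes the completeness direction immediate (the planted clique directly yields $K_{\kappa,\kappa}$, with no factor-of-two loss) at the cost of a slightly more careful soundness argument that must separately handle overlapping $S_V, T_V$ (the paper cites \cite[Lemma 5.17]{ChalermsookCKLM20} for this). You instead omit the diagonal, pay a harmless factor of two in completeness by splitting the planted clique into two halves, and get a cleaner soundness union bound restricted to disjoint pairs. Both choices are standard and the calculations match; no gap.
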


The above result is folklore; for completeness, we provide a proof sketch in \Cref{app:pc-biclique}.

While \Cref{lem:pc-biclique} is indeed a strong result, it is not yet enough for us. Specifically, if we want to apply \Cref{obs:balanced-to-unbalanced} from the previous section, then we would now only reduce to the $(\delta \log N, \zeta \log N)$-Gap Exponential Biclique problem. However, this is trivial because $\delta < 1 < \zeta$.

\subsection{One-Sided Graph Product}

The above problem is due to the fact that the YES case is too small to apply the observation directly. To boost the YES case, we use the following ``one-sided'' graph product, which helps boost one side of the biclique in the YES case while not increasing the NO case. The reduction and its main properties are described below.

\begin{figure}[h!]
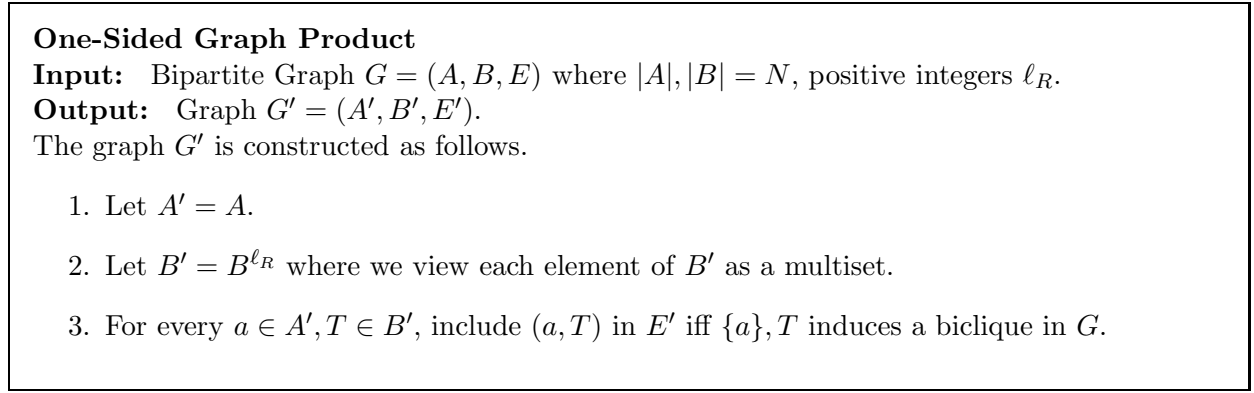

\begin{framed}
\textbf{One-Sided Graph Product} \\
\textbf{Input: } Bipartite Graph $G = (A, B, E)$ where $|A|, |B| = N$, positive integers $\ell_R$. \\
\textbf{Output: } Graph $G' = (A', B', E')$. \\
The graph $G'$ is constructed as follows.
\begin{enumerate}
\item Let $A' = A$.
\item Let $B' = B^{\ell_R}$ where we view each element of $B'$ as a multiset.
\item For every $a \in A', T \in B'$, include $(a, T)$ in $E'$ iff $\{a\}, T$ induces a biclique in $G$.
\end{enumerate}
\end{framed}
\caption{One-Sided Graph Product. \label{fig:graph-prod-one-sided}}
\end{figure}

\begin{lemma} \label{lem:graph-prod-one-sided}
Let $\delta \in (0, 1)$ be a constant and $N$ be sufficiently large (depending on $\delta$). 
Let $G, G'$ be as in \Cref{fig:graph-prod-one-sided}. Then, we have
\begin{itemize}
\item (Completeness) If $G$ contains $K_{q_1, q_1}$, $q_1 \geq t_1$ and $\ell_R \log(q_1) \geq t_1$, then $G'$ contains $K_{t_1, 2^{t_1}}$.
\item (Soundness) If $G$ is $K_{q_2, q_2}$-free, $q_2 \leq t_2$ and $\ell_R \log(q_2) \leq t_2$, then $G'$ is $K_{t_2, 2^{t_2}}$-free.
\end{itemize}
\end{lemma}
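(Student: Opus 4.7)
The plan is to prove completeness and soundness separately via elementary arguments directly from the construction. Throughout, I will interpret $B' = B^{\ell_R}$ as ordered tuples; the ``view as multiset'' remark is innocuous because the edge relation $(a,T) \in E'$ depends only on the set of components of $T$, so ordered tuples give a valid (and cleaner) counting framework.

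For completeness, I start from a biclique $K_{q_1,q_1}$ in $G$ with sides $S \subseteq A$ and $U \subseteq B$. Since $q_1 \geq t_1$, pick any $t_1$-subset $S^* \subseteq S$ as the left side of the new biclique. For the right side, choose any $2^{t_1}$ distinct tuples from $U^{\ell_R}$; this is possible because $|U^{\ell_R}| = q_1^{\ell_R} \geq 2^{t_1}$ by the hypothesis $\ell_R \log q_1 \geq t_1$. Every such tuple has all of its components in $U$, hence adjacent (in $G$) to every vertex in $S^* \subseteq S$, so $(a,T) \in E'$ for every such $(a,T)$ by definition of the product, yielding $K_{t_1, 2^{t_1}}$ in $G'$.

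For soundness, I argue the contrapositive: if $G'$ contains $K_{t_2, 2^{t_2}}$ with sides $S' \subseteq A$ and $T' \subseteq B'$, then $G$ contains $K_{q_2, q_2}$. Let $U \subseteq B$ be the union of the components of the tuples in $T'$. Because $(a,T) \in E'$ forces $a$ to be adjacent to every component of $T$, the sets $S'$ and $U$ induce a complete bipartite subgraph of $G$. I then split on $|U|$. If $|U| \geq q_2$, then together with $|S'| = t_2 \geq q_2$ I can extract $K_{q_2,q_2}$ from $S'$ and $U$ directly. Otherwise $|U| < q_2$, and since every tuple in $T'$ lies in $U^{\ell_R}$, we would have $|T'| \leq |U|^{\ell_R} < q_2^{\ell_R} \leq 2^{t_2}$, where the last inequality uses $\ell_R \log q_2 \leq t_2$; this contradicts $|T'| = 2^{t_2}$ and shows this second case cannot occur.

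The whole argument is essentially a parameter calculation once the right viewpoint on $B^{\ell_R}$ is fixed, so I do not anticipate a real obstacle. The only subtle point is the case split in soundness, where the hypothesis $\ell_R \log q_2 \leq t_2$ is precisely what guarantees that too small a support $U$ cannot support $2^{t_2}$ distinct tuples; this is where the one-sided exponential blow-up buys us the asymmetry between $q_2$ and $2^{t_2}$ that the reduction needs.
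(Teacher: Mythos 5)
Your proof is correct and follows essentially the same approach as the paper's: for completeness both take $P$ and $Q^{\ell_R}$ from a $K_{q_1,q_1}$-biclique $P,Q$ and use $q_1\geq t_1$ and $q_1^{\ell_R}\geq 2^{t_1}$, and for soundness both take $Q=\bigcup_{T\in\mathcal{T}}T$ from a purported $K_{t_2,2^{t_2}}$ and use $\mathcal{T}\subseteq Q^{\ell_R}$ to deduce $|Q|\geq 2^{t_2/\ell_R}\geq q_2$, your case split being a mild restructuring of that single inequality. Your observation that the ordered-tuple reading of $B^{\ell_R}$ is the cleaner one is fair, and the paper's own computation $|Q^{\ell_R}|=q_1^{\ell_R}$ in fact implicitly relies on it despite the ``view as multiset'' phrasing.
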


\begin{proof}
\textbf{(Completeness)} Let $P \subseteq A, Q \subseteq B$ be a $K_{q_1, q_1}$-biclique in $G$. Observe that $P$ and $Q^{\ell_R}$ induces a biclique in $G'$. We also have $|Q^{\ell_R}| = q_1^{\ell_R} \geq 2^{t_1}$ where the inequality follows from the second assumption on the parameters. Thus, we can conclude that $G'$ contains $K_{t_1, 2^{t_1}}$ as desired.

\textbf{(Soundness)} 
Suppose for the sake of contradiction that $G'$ contains $K_{t_2, 2^{t_2}}$, i.e. there exists $P \subseteq A', \cT \subseteq B'$ of sizes $t_2, 2^{t_2}$ respectively that induces a biclique. By definition, this also means that $P$ and $Q := \left(\bigcup_{T \in \cT} T\right) \subseteq B$ induce a biclique in the graph $G$. Since $\cT \subseteq Q^{\ell_R}$, we must have $Q \geq |\cT|^{1 / \ell_R} \geq q_2$, where the second inequality follows from our assumption that $\ell_R \log(q_2) \leq t_2$. Recall also that we assume that $t_2 \geq q_2$. As a result, we can conclude that $G$ contains $K_{q_2, q_2}$.
\end{proof}

\subsection{Proof of \Cref{thm:pc}}

We can now prove the desired hardness (\Cref{thm:pc}) via the one-sided graph product where we pick $\ell_R$ to be a sufficiently large constant.

\begin{proof}[Proof of \Cref{thm:pc}]
Let $C > 1$ be any constant. We will prove below that there is no polynomial-time algorithm for $(4 C d, d)$-Gap Exponential Biclique, where $d$ is a parameter to be set below. By \Cref{thm:main-red}, this immediately implies that there is no polynomial-time $C$-approximation algorithm for VC dimension or Littlestone's dimension.

Let $G$ be the input to the $(\lceil N^{\delta} \rceil, \zeta \lceil \log N\rceil)$-Gap Biclique problem.
Let $q_1 = \lceil N^{\delta} \rceil, d = q_2 = \lceil \zeta \log N \rceil$.
We use the reduction in \Cref{fig:graph-prod-one-sided} with $t_1 = 4Cd, t_2 = d, \ell_R = \lceil t_1 / \log(q_1) \rceil$. The reduction runs in polynomial time (because $\ell_R \leq O(1)$). We now check the completeness and soundness:
\begin{itemize}
\item (Completeness) By our setting of parameters, $\ell_R\log(q_1) \geq t_1$. Furthermore, for any sufficiently large $N$, we have $q_1 = \Theta(N^{\delta})$ is at least $t_1 = \Theta(\log N)$. Thus, applying the completeness of \Cref{lem:graph-prod-one-sided}, we conclude that, if $G$ contains $K_{q_1, q_1}$, then the $G'$ contains $K_{t_1, 2^{t_1}}$.
\item (Soundness) By our setting of parameters, $q_2 = t_2$. Furthermore, $\ell_R \log(q_2) = O(\log \log N)$ must be less than $t_2 = \Theta(\log N)$ for any sufficiently large $N$. Thus, applying the completeness of \Cref{lem:graph-prod-one-sided}, we can conclude that, if $G$ is $K_{q_2, q_2}$-free, then the $G'$ is $K_{t_1, 2^{t_1}}$-free.
\end{itemize} 

Thus, if there were a polynomial-time algorithm for $(4Cd, d)$-Gap Exponential Biclique Problem, then we could apply the above reduction and run it on the resulting graph $G'$ to solve the  $(\lceil N^{\delta} \rceil, \zeta \lceil \log N\rceil)$-Gap Biclique problem in polynomial time. From \Cref{lem:pc-biclique}, this would violate the Strongish Planted Clique Hypothesis.
\end{proof}

\section{Hardness from Strongish Planted Clique Hypothesis}

Finally, we will prove the hardness results based on the Strongish Planted Clique Hypothesis (\Cref{thm:spc-approx,thm:spc-time}). Again, we start by recalling that the Strongish Planted Clique Hypothesis already implies a fairly strong hardness of Gap Biclique, as stated below. (Proof sketch in \Cref{app:pc-biclique}.)

\begin{lemma}[Folklore] \label{lem:spc-biclique}
Assuming the Strongish Planted Clique Hypothesis, for some constants $\delta \in (0, 1), \zeta > 1$, there is no $N^{o(\log N)}$-time algorithm for $(\lceil N^{\delta} \rceil, \lceil \zeta \cdot \log N\rceil)$-Gap Biclique.
\end{lemma}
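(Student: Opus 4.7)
The plan is to give a randomized polynomial-time reduction from the distinguishing problem in the Strongish Planted Clique Hypothesis to the $(\lceil N^{\delta'} \rceil, \lceil \zeta \log N \rceil)$-Gap Biclique problem. Given a graph $G$ on $N$ vertices sampled from either $\cG(N, 1/2)$ or $\cG(N, 1/2, \lceil N^{\delta} \rceil)$, I would independently assign each vertex of $V(G)$ to side $A$ or to side $B$ with probability $1/2$ each, and output the bipartite graph $G' = (A, B, E')$ in which $(a, b) \in E'$ iff $\{a,b\} \in E(G)$. The resulting graph has $N$ vertices and the reduction runs in time $O(N^2)$, so any $N^{o(\log N)}$-time algorithm for the output Gap Biclique problem would yield an $N^{o(\log N)}$-time distinguisher, contradicting the Strongish Planted Clique Hypothesis.

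For \textbf{completeness}, suppose $G$ contains a planted clique $K$ of size $\kappa := \lceil N^{\delta} \rceil$. Each vertex of $K$ lands in $A$ or in $B$ independently and uniformly at random, so by a Chernoff bound, with probability at least $2/3$ we have $\min(|K \cap A|, |K \cap B|) \geq \kappa/3$. In this event $G'$ contains $K_{q_1, q_1}$ for $q_1 := \lceil \kappa/3 \rceil$, which is at least $\lceil N^{\delta'} \rceil$ for any fixed $\delta' < \delta$ and all sufficiently large $N$.

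For \textbf{soundness}, suppose $G \sim \cG(N, 1/2)$. Conditioned on the (data-independent) partition into $A \cup B$, the edges of $G'$ are independent fair coin flips. For any fixed $S \subseteq A, T \subseteq B$ with $|S| = |T| = q_2 := \lceil \zeta \log N \rceil$, the probability that $S \cup T$ spans a complete bipartite subgraph is exactly $2^{-q_2^2}$. A union bound over the at most $\binom{N}{q_2}^2 \leq N^{2 q_2}$ such pairs shows that $G'$ contains $K_{q_2, q_2}$ with probability at most $N^{2 q_2} \cdot 2^{-q_2^2}$. For any fixed $\zeta > 2/\log 2$ this quantity is $o(1)$, so $G'$ is $K_{q_2, q_2}$-free with probability at least $2/3$.

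The only mildly delicate issue is the constant-factor loss in the completeness case (going from $\kappa$ to $\kappa/3$), which forces us to settle for a slightly smaller exponent $\delta'$ than the one provided by the hypothesis; this is harmless since the conclusion only asks for \emph{some} $\delta' \in (0,1)$. Picking, for instance, $\delta' = \delta/2$ and $\zeta = 3$ and combining the two cases finishes the proof, since the two failure events together still leave total success probability at least $1/3$, which can be boosted to $2/3$ by a standard constant number of independent repetitions of the reduction.
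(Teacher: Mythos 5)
Your proof is correct, but the reduction you use is genuinely different from the one sketched in the paper's appendix. The paper's reduction is deterministic: it sets $A'$ and $B'$ to be two \emph{copies} of the full vertex set $V$, and joins $a \in A'$ to $b \in B'$ iff $a = b$ or $\{a,b\} \in E$. This turns a $\kappa$-clique directly into a $K_{\kappa,\kappa}$ (the diagonal edges handle the $a = b$ pairs), so there is no loss in the YES exponent, while in the NO case a $K_{q,q}$ in $G'$ yields (after removing the diagonal overlap) a clique or biclique of size $\Omega(q)$ in $G$, giving the claimed $\Theta(\log N)$ bound with a small constant loss. Your reduction instead \emph{partitions} $V$ randomly into $A$ and $B$ and keeps only the crossing edges. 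This has the advantage that the NO-case soundness analysis becomes an immediate exercise (the bipartite graph is exactly Erd\H{o}s--R\'enyi with parameter $1/2$, so a clean union bound applies), at the cost of a randomized reduction and a constant-factor loss in the YES side (from $N^{\delta}$ to $N^{\delta'}$ for $\delta' < \delta$). Both costs are harmless here since the lemma only asserts the existence of some constants $\delta, \zeta$, and the composition of randomized reduction with randomized algorithm is handled (a little loosely, but correctly) by your remark on amplification. One thing worth tightening in your writeup: the Chernoff-bound failure probability in completeness and the union-bound failure probability in soundness are both $o(1)$, not merely $\le 1/3$, which makes the final composition argument cleaner -- the per-trial success probability is $2/3 - o(1) > 1/2$, so majority amplification over a constant number of trials goes through.
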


The proof of \Cref{thm:spc-time} is exactly the same as \Cref{thm:pc} except that we start with an $N^{\Omega(\log N)}$ running time lower bound (in \Cref{lem:spc-biclique}) instead of just polynomial running time lower bound (in \Cref{lem:pc-biclique}), so we end up with an $n^{\Omega(\log n)}$ running time lower bound as well.

For the remainder of this section, we focus on proving the tight inapproximability ratio (\Cref{thm:spc-approx}).

\subsection{Two-Sided Randomized Graph Product}

Again, we employ graph products, but this time the product is two-sided and furthermore is randomized. This ``two-sided randomized graph product'' has been used several times in literature (e.g.~\cite{Khot06}) but we will consider different parameter regimes compared to previous works. Therefore, we state the randomized graph product and its properties in full below.

\begin{figure}[h!]
\begin{framed}
\textbf{Two-Sided Randomized Graph Product~\cite{BermanS89}} \\
\textbf{Input: } Bipartite Graph $G = (A, B, E)$ where $|A|, |B| = N$, positive integers $n, \ell$. \\
\textbf{Output: } Graph $G' = (A', B', E')$. \\
The graph $G'$ is constructed as follows.
\begin{enumerate}
\item For each $i \in [n]$, independently sample $S_i \sim A^{\ell}$. Then, let $A' = \{S_1, \dots, S_n\}$.
\item For each $j \in [n]$, independently sample $T_i \sim B^{\ell}$. Then, let $B' = \{T_1, \dots, T_n\}$.
\item For every $i, j \in [n]$, include $(S_i, T_j)$ in $E'$ iff $S_i, T_j$ induces a biclique in $G$.
\end{enumerate}
\end{framed}
\caption{Two-Sided Randomized Graph Product}
\label{fig:graph-prod}
\end{figure}

\begin{lemma} \label{lem:graph-prod-main}
Let $\delta \in (0, 1)$ be a constant and $N$ be sufficiently large (depending on $\delta$). Furthermore, suppose that $n \geq 10 2^{t_1} \cdot (N/q_1)^\ell, n \le 1000 N^{(1-0.5\delta)\ell}, 20 \le \ell$ and $0.005\delta \cdot t_2 \cdot \ell \geq q_2$. Then, the reduction in \Cref{fig:graph-prod} is a reduction from $(q_1, q_2)$-Gap Biclique to $(2^{t_1}, t_2)$-Gap Biclique.
\end{lemma}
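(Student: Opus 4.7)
\noindent The plan is to verify completeness and soundness of the reduction separately, each via a probabilistic argument over the randomness in drawing the tuples $S_i, T_j$.

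For completeness, suppose $P \subseteq A$ and $Q \subseteq B$ each of size $q_1$ induce $K_{q_1, q_1}$ in $G$. Call $i \in [n]$ \emph{$A$-good} if $S_i \in P^\ell$; this happens independently with probability $(q_1/N)^\ell$. The hypothesis $n \ge 10 \cdot 2^{t_1}(N/q_1)^\ell$ forces the expected number of $A$-good indices to be at least $10 \cdot 2^{t_1}$, so a Chernoff bound yields at least $2^{t_1}$ $A$-good indices except with exponentially small probability; the analogous $B$-good statement holds symmetrically. By construction, any $A$-good $S_i$ paired with any $B$-good $T_j$ is an edge of $G'$, so after a union bound over the two Chernoff failure events we obtain $K_{2^{t_1}, 2^{t_1}}$ in $G'$ with probability at least $2/3$.

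For soundness, assume $G$ is $K_{q_2, q_2}$-free. If $G'$ contains $K_{t_2, t_2}$ indexed by $I, J \subseteq [n]$ of size $t_2$, then the unions $U := \bigcup_{i \in I} S_i \subseteq A$ and $V := \bigcup_{j \in J} T_j \subseteq B$ induce a biclique in $G$, forcing $\min(|U|, |V|) < q_2$. By symmetry, it suffices to upper bound the probability that some $t_2$-subset $I$ of indices has all of its $S_i$'s contained in a common $q_2$-subset of $A$. Union-bounding over the $\binom{n}{t_2}$ choices of $I$ and the $\binom{N}{q_2}$ choices of target set, and using that each $S_i$ lies in any fixed $q_2$-set independently with probability at most $(q_2/N)^\ell$, gives
\[
\Pr[G' \supseteq K_{t_2, t_2}] \;\le\; 2 \binom{n}{t_2} \binom{N}{q_2} \left(\frac{q_2}{N}\right)^{\ell t_2}.
\]

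The main obstacle is verifying this bound is at most $1/3$ under the four hypotheses, which is a careful but elementary calculation. Taking logarithms and plugging in $\log n \le \log 1000 + (1 - 0.5\delta)\ell \log N$, $\binom{N}{q_2} \le (eN/q_2)^{q_2}$, and $\ell t_2 - q_2 \ge 0.995\,\ell t_2$ (the latter from $q_2 \le 0.005\delta\,\ell t_2$), the dominant term becomes a negative multiple of $\delta \ell t_2 \log N$: the $q_2 \log N$ piece is absorbed into a $0.005\delta$-fraction of $\delta \ell t_2 \log N$ by the same bound on $q_2$, the additive $t_2 \log 1000$ slack is absorbed because $\ell \ge 20$, and for $N$ sufficiently large the remaining subleading terms are dominated. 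The four hypotheses are thus tightly tailored: each of the three exponential factors—$n^{t_2}$, $N^{q_2}$, and $(q_2/N)^{\ell t_2}$—contributes as planned, and the interplay among all four constraints is what makes the bookkeeping the delicate part of the proof.
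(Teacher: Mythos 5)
Your completeness argument is essentially identical to the paper's (concentration of $|A' \cap P^\ell|$ and $|B' \cap Q^\ell|$ about their means, then a union bound over the two failure events).

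Your soundness argument is a genuinely different route. The paper invokes a ``disperser'' property (\Cref{lem:disperser}, from~\cite{ManurangsiRS21}): with probability at least $0.9$, \emph{every} $t_2$-subset of the $S_i$'s (and of the $T_j$'s) has union of size at least $0.01\gamma\, t_2\ell \geq q_2$. You replace that black box with a self-contained union bound over pairs consisting of a $t_2$-subset of indices and a $q_2$-subset of $A$ (resp.\ $B$), which is a more elementary and transparent substitute; the reductions from ``$K_{t_2,t_2}$ in $G'$'' to ``some $t_2$-tuple has small union'' to ``$K_{q_2,q_2}$ in $G$'' are the same in both.

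Where I would push back is the arithmetic you describe as ``careful but elementary.'' After taking logarithms, the dominant \emph{positive} contribution in your bound is $\ell t_2 \log q_2$ (coming from $q_2^{\ell t_2}$ inside $(q_2/N)^{\ell t_2}$), and this is not a subleading term: after absorbing the $q_2 \log N \le 0.005\delta\,\ell t_2 \log N$ cost of $\binom{N}{q_2}$ into the $-0.5\delta\,\ell t_2 \log N$ gain from $n \le 1000N^{(1-0.5\delta)\ell}$, one is left needing $\ell t_2 \log q_2 \lesssim 0.495\delta\,\ell t_2 \log N$, i.e.\ roughly $q_2 \lesssim N^{0.495\delta}$. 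This is not a formal consequence of the four stated hypotheses. The paper's proof has the same silent requirement — applying \Cref{lem:disperser} with $\gamma = 0.5\delta$ to sets of size $t_2$ needs $t_2 \le N^{0.99\gamma}/\ell$, which with the fourth hypothesis again gives $q_2 \lesssim N^{0.495\delta}$ — and in the only place \Cref{lem:graph-prod-main} is used (the proof of \Cref{thm:spc-approx}), one has $q_2 = O(\log N)$ and $t_2\ell = O(\log N)$, so both proofs go through comfortably. So you have not introduced a new gap, but you should state the needed bound $\log q_2 \lesssim \delta\log N$ explicitly rather than file $\ell t_2 \log q_2$ under ``remaining subleading terms,'' since it is in fact the term that decides the sign of the whole expression.
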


To prove \Cref{lem:graph-prod-main}, we require a lemma showing that $S_1, \dots, S_n, T_1, \dots, T_n$ (when viewed as sets) are ``dispersers'', meaning that a union of a certain number of them is sufficiently large. \cite{ManurangsiRS21} also used such a property and their lemma (stated below) will be sufficient for us.

\begin{lemma}[{\cite[Lemma 7]{ManurangsiRS21}}] \label{lem:disperser}
Let $\gamma > 0$, and suppose $n \le 1000 N^{(1-\gamma)\ell}$, $20\le \ell$. Let $S_1, \dots, S_n, T_1, \dots, T_n$ be as sampled as in  \Cref{fig:graph-prod}.
Then, with probability at least 0.9, the following event occurs: for every $M \subseteq [n]$ with $|M| \le N^{0.99\gamma} / \ell$, we have $|\bigcup_{i \in M} S_i| \geq 0.01 \gamma |M| \ell$ and $|\bigcup_{i \in M} T_i| \geq 0.01 \gamma |M| \ell$.
\end{lemma}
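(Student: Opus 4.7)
The plan is to control the failure event by a two-level union bound: first over subsets $M\subseteq[n]$ of each size $m$, and then, for each $M$, over possible ``hitting sets'' $U\subseteq A$ of size $k:=\lfloor 0.01\gamma m\ell\rfloor$ that could contain $\bigcup_{i\in M}S_i$. By symmetry, it suffices to prove the property for the $S_i$'s with failure probability at most $0.05$; combining with the analogous bound for the $T_i$'s gives the claimed $0.9$.

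Fix $m\in\{1,\dots,\lfloor N^{0.99\gamma}/\ell\rfloor\}$. For any fixed $M$ of size $m$ and any fixed $U\subseteq A$ of size $k$, the event $\bigcup_{i\in M}S_i\subseteq U$ is the intersection of the $m\ell$ independent events ``a particular coordinate of some $S_i$ lands in $U$,'' each having probability $k/N$. Thus, using $\binom{a}{b}\le(ea/b)^b$,
\[
\Pr\!\left[\exists\,M,\,|M|=m,\;|\bigcup_{i\in M}S_i|<0.01\gamma m\ell\right]\;\le\;\binom{n}{m}\binom{N}{k}\left(\frac{k}{N}\right)^{m\ell}\;\le\;\left(\frac{en}{m}\right)^{m}\left(\frac{eN}{k}\right)^{k}\left(\frac{k}{N}\right)^{m\ell}.
\]
After taking logarithms, substituting $\log n \le (1-\gamma)\ell\log N + O(1)$ and $\log(N/k)\ge(1-0.99\gamma)\log N - O(1)$ (the latter using $k\le 0.01\gamma\, N^{0.99\gamma}$), and carefully expanding, the coefficient of $\log N$ in the resulting upper bound works out to roughly $-\Omega(\gamma)\cdot m\ell$, strictly negative for every fixed $\gamma>0$. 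The assumption $\ell\ge 20$ is what guarantees that the subleading additive terms (bounded constants, a $-\Omega(m\ell\log\ell)$ contribution, and $O(m\ell)$) do not overwhelm this negative main term. Each term in the sum over $m$ is then at most $N^{-\Omega(\gamma\ell)\cdot m}$, a geometric series whose total is bounded by twice its first term, hence by $0.05$ for $N$ sufficiently large relative to $\gamma$.

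The main obstacle is the delicate cancellation in the log-$N$ coefficient. A naive expansion gives coefficient exactly $0$: the positive contribution $m(1-\gamma)\ell\log N$ from $\binom{n}{m}$ is almost perfectly canceled by the negative contribution $-(m\ell-k)\log(N/k)$, and the cross term $(m\ell-k)\log m$ arising when we write $\log k = \log m + \log(0.01\gamma\ell)$ is, in the worst case $m\approx N^{0.99\gamma}/\ell$, as large as $\approx 0.99\gamma\,m\ell\log N$. A genuinely negative residual only emerges after combining the $-m\log m$ term from $\log\binom{n}{m}$ with the $-\log\ell$ piece of the bound $\log m\le 0.99\gamma\log N-\log\ell$. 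One therefore has to track all three constants ($0.01\gamma$, $0.99\gamma$, $(1-\gamma)\ell$) simultaneously and use $\ell\ge 20$ essentially optimally; loosening any of them by a constant factor breaks the argument. Modulo this bookkeeping, the proof reduces to routine estimates on binomial coefficients.
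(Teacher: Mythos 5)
This lemma is imported verbatim from \cite{ManurangsiRS21} (their Lemma~7); the paper you were given contains no proof of it, so there is nothing in-paper to compare against, and your argument should be judged on its own. Your two-level union bound --- over sets $M$ of size $m$ and over candidate covers $U$ of size $k=\lfloor 0.01\gamma m\ell\rfloor$, with $(k/N)^{m\ell}$ for each fixed pair by independence of the $m\ell$ coordinates --- is the standard route to such disperser statements, and the bookkeeping does close: writing $c=0.01\gamma$, one gets $\log P_m \le -0.99\gamma(c\ell+1)\,m\log N + m\ell(1-c)\log c + m\log\ell + cm\ell + m\log(1000e)$ after substituting $\log m\le 0.99\gamma\log N-\log\ell$, and for $\ell\ge 20$, $\gamma\lesssim 1$ the non-$\log N$ terms net to about $-4m\ell$, so $P_m\le N^{-0.99\gamma m}e^{-\Omega(m\ell)}$ and the geometric sum over $m$ (and the symmetric bound for the $T_i$'s) is far below $0.1$. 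Two quantitative points in your sketch are off, though neither breaks the proof: the worst-case coefficient of $\log N$ is $-\Theta(\gamma^2)m\ell-\Theta(\gamma)m$, not $-\Omega(\gamma)m\ell$, and correspondingly the per-size bound is not $N^{-\Omega(\gamma\ell)m}$; and your closing proviso ``for $N$ sufficiently large relative to $\gamma$'' is not a hypothesis of the lemma --- it is avoidable if you keep the $m\ell(1-c)\log(0.01\gamma)\le -4.5m\ell$ term (also note the statement is vacuous unless $N^{0.99\gamma}\ge\ell\ge 20$), and in any case it would be harmless for how the lemma is used, since \Cref{lem:graph-prod-main} already assumes $N$ sufficiently large. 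Also worth recording explicitly: the failure event $|\bigcup_{i\in M}S_i|<0.01\gamma m\ell$ forces the union into some $U$ with $|U|=k$ only when $k\ge 1$, and when $k=0$ the event is impossible, so that edge case is fine.
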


\begin{proof}[Proof of \Cref{lem:graph-prod-main}]
\textbf{(Completeness)} Let $P \subseteq A, Q \subseteq B$ be a $K_{q_1, q_1}$-biclique in $G$. Observe that $(A' \cap P^\ell)$ and $(B' \cap Q^\ell)$ induce a biclique in $G'$. Notice also that each $S_i$ belongs to $P^\ell$ with probability $(q_1 / N)^{\ell}$. Therefore, we have $\E[|A' \cap P^\ell|] = n \cdot (q_1 / N)^{\ell} \geq 10 2^{t_1}$, where the inequality follows from our assumption on parameters. Applying standard concentration bounds, we can conclude that $\Pr[|A' \cap P^\ell| \geq 2^{t_1}] \geq 0.95$. An analogous argument shows that $\Pr[|B' \cap Q^\ell| \geq 2^{t_1}] \geq 0.95$. Applying the union bound, $G'$ contains $K_{2^{t_1}, 2^{t_1}}$ with probability at least 0.9.

\textbf{(Soundness)} 
From our assumption on the parameters, we can apply \Cref{lem:disperser} (with $\gamma = 0.5\delta$), which guarantee that with probability 0.9, the following holds: 
\begin{align} \label{eq:expansion}
\forall M \in \binom{[n]}{t_2}, & &\left|\bigcup_{i \in M} S_i\right| \geq q_2 &\text{ and } \left|\bigcup_{i \in M} T_i\right| \geq q_2.
\end{align} We will prove that, when this holds, if $G$ is $K_{q_2, q_2}$-free, then $G'$ is $K_{t_2, t_2}$-free. 

Suppose contrapositively that $G'$ contains $K_{t_2, t_2}$, i.e. there exists $\cP \subseteq A', \cQ \subseteq B'$ each of size $t_2$ respectively that induces a biclique in $G'$. By definition, this also means that $P := \left(\bigcup_{S \in \cP} S\right) \subseteq A$ and $Q := \left(\bigcup_{T \in \cQ} T\right) \subseteq B$ induce a biclique in the graph $G$. By~\eqref{eq:expansion}, we must have $|P|, |Q| \geq q_2$, meaning that $G$ contains $K_{q_2, q_2}$.
\end{proof}

\subsection{Proof of \Cref{thm:spc-approx}}

We can now prove \Cref{thm:spc-approx} by applying the above two-sided randomized graph product with appropriate parameters (e.g. $\ell$ is $o(\log N)$).

\begin{proof}[Proof of \Cref{thm:spc-approx}]
Let $\alpha(n)$ denote any function such that $\alpha(n) = o(\log n)$.  We will prove below that there is no polynomial-time algorithm for $(4 \alpha(n) d, d)$-Gap Exponential Biclique, where $d$ is a parameter to be set below. By \Cref{thm:main-red}, this immediately implies that there is no polynomial-time $\alpha(n)$-approximation algorithm for VC dimension or Littlestone's dimension.

Let $G$ denote the input to the $(\lceil N^{\delta} \rceil, \zeta \lceil \log N\rceil)$-Gap Biclique problem.
Let $q_1 = \lceil N^{\delta} \rceil, q_2 = \lceil \zeta \log N \rceil$ and $d = \left\lceil(\log N)^{2/3} / \alpha(N^{\log N})^{1/3}\right\rceil$. Note that $d = \omega(1)$ because $\alpha(N^{\log N}) = o(\log(N^{\log N})) = o(\log^2 N)$. We use the reduction in \Cref{fig:graph-prod} with $t_1 = \alpha(N^{\log N}) \cdot d, t_2 = d, \ell = \max\left\{20, \left\lceil \frac{200 q_2}{t_2 \delta} \right\rceil\right\}, n = \lceil 10 2^{t_1} \cdot (N/q_1)^{\ell} \rceil$. Note that $\ell = O(q_2/t_2) = o(\log N)$. Furthermore, $t_1 / \ell = O(\alpha(N^{\log N}) \cdot d^2 / \log N) = O(\alpha(N^{\log N})^{1/3} \cdot (\log N)^{1/3}) = o(\log N)$. Therefore, we have $n = O(2^{t_1} \cdot N^{(1 - \delta)\ell}) = N^{(1 - \delta + o(1))\ell} \leq N^{o(\log N)}$. The second-to-last inequality also implies that $n \leq N^{(1 - 0.5\delta)\ell}$ for any sufficiently large $N$. In other words, for any sufficiently large $N$, the parameters satisfy conditions in \Cref{lem:graph-prod-main}.


Thus, the reduction runs in $n^{O(1)} = N^{o(\log N)}$ time and produces an instance to the $(2^{t_1}, t_2)$-Gap Biclique Problem. By \Cref{obs:balanced-to-unbalanced}, this is also an instance for the $(t_1, t_2)$-Gap Exponential Biclique problem. By definition, $t_1/t_2 = \alpha(N^{\log N}) \geq \alpha(n)$ for any sufficiently large $N$.

Thus, if there were a polynomial-time algorithm for $(4\alpha(n)d, d)$-Gap Exponential Biclique Problem, then we could apply the above reduction and run it on the resulting graph $G'$ to solve the $(\lceil N^{\delta} \rceil, \zeta \lceil \log N\rceil)$-Gap Biclique problem in time $N^{o(\log N)}$. From \Cref{lem:spc-biclique}, this would violate the Strongish Planted Clique Hypothesis.
\end{proof}

\section{Conclusion and Discussion}
\label{sec:conclusion}

In this work, we establish several hardness of approximation results and running time lower bounds for approximating VC Dimension and Littlestone's Dimension. For polynomial-time algorithms, we rule out $o(\log n)$-approximation~\Cref{thm:eth-inapprox,thm:spc-approx}, which is tight. For any constant factor approximation, we rule out algorithms that runs in $n^{o(\log n)}$ time under the Strongish Planted Clique Hypothesis but only $n^{\tilde{o}(\log^{1/3} N)}$ time under ETH/Gap-ETH. The latter is not a coincidence: the best running time lower bounds for finding $k$ balanced biclique known under ETH/Gap-ETH is only $n^{\Omega(\sqrt{k})}$ even for the exact version of the problem~\cite{Lin15} while the trivial algorithm runs in $n^{O(k)}$ time. Closing this gap is a well-known open problem in parameterized complexity. Due to our reduction, improved running time lower bounds for finding $k$ balanced biclique may also lead to improved running time lower bounds for VC Dimension and Littlestone's Dimension.

In addition to closing the gap in the time lower bounds, another interesting question--originally posed in~\cite{FrancesL98}--is whether there is an efficient online learner for any given concept class with approximately optimal mistake bound (as defined in \Cref{def:mistake-bound}). Ostensibly, this problem is very similar to that of approximating Littlestone's Dimension. In fact, in the \emph{exact} setting,~\cite{FrancesL98} showed that there is an efficient online learner  with \emph{exactly} optimal mistake bound iff there is an efficient algorithm for \emph{exactly} computing Littlestone's Dimension. However, this reduction breaks down for the approximate setting. In particular, our results do not rule out the fact that an efficient online learner with mistake bound at most, say, twice the optimal exists.  This remains an interesting open question. On this front, we remark that the learner used in our reduction is indeed efficient (given in \Cref{lem:littlestone-from-biclique}); instead, the computational burden falls to the adversary / nature who has to choose the ``hard'' input sequence that induces a large biclique.

\section*{Acknowledgment}

I would like to thank ITCS 2023 reviewers for their helpful comments and suggestions.

\bibliographystyle{alpha}
\bibliography{ref}

\appendix

\section{Parameterized Hardness of Approximation}
\label{app:fpt-inapprox}

In this section, we briefly discuss the implications of our results for parameterized algorithms. Fixed-parameter algorithms (FPT algorithms) are those that run in $f(k) \cdot n^{O(1)}$-time where $k$ denote the parameter (specified as part of the input) and, as is standard, $n$ denote the input size\footnote{For more detail on FPT approximation algorithms and hardness of approximation results, please refer e.g. to the survey~\cite{FeldmannSLM20}.}. In our case, we say that an algorithm is an $\alpha$-approximation FPT algorithm for VC Dimension (resp. Littlestone's Dimension) iff it is an FPT algorithm that can decide between $\VC(\C) \geq k$ and $\VC(\C) < k / \alpha$ (resp. $\Ldim(\C) \geq k$ and $\Ldim(\C) < k / \alpha$). We remark that an $\alpha$-approximation FPT algorithm is only non-trivial iff $\alpha = o(k)$, since there is a trial $n^{k/\alpha}$-time $\alpha$-approximation algorithm. We show that, unfortunately, no non-trivial FPT approximation algorithm exists for VC / Littlestone's Dimensions, as stated below. Previously, W[1]-hardness against \emph{exact} FPT algorithms was known~\cite{DowneyEF93}, but we are not aware of any result ruling out FPT approximation algorithms (even for small approximation ratio e.g. 1.001).

\begin{corollary}
Assuming Gap-ETH, there is no $o(k)$-approximation FPT algorithm for VC Dimension or Littlestone's Dimension.
\end{corollary}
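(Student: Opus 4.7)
The plan is to adapt the argument of \Cref{thm:gap-eth-inapprox} to the FPT setting. Suppose for contradiction that an FPT $\alpha$-approximation algorithm $\A$ exists for VC Dimension (the argument for Littlestone's Dimension is analogous, since $\Ldim \geq \VC$) with $\alpha(k) = o(k)$, running in time $f(k) \cdot n^{O(1)}$ for some computable function $f$. By \Cref{thm:main-red}, for any choice of parameter $k$, combining $\A$ with the reduction solves the $(k/2, 2k/\alpha(k))$-Gap Exponential Biclique problem in the same running time (up to a constant factor).

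Given a $(1, 1-\mu)$-Gap 3SAT instance on $N$ variables, first I would apply \Cref{thm:red-sat-biclique} followed by \Cref{obs:balanced-to-unbalanced} with a parameter $r$ to be chosen, producing an $n$-vertex $(t_1, t_2)$-Gap Exponential Biclique instance with $n = 2^{\Theta(N/\sqrt{r})}$, $t_1 = \Theta(N/r)$, and $t_2 = r$. Following the proof of \Cref{thm:gap-eth-inapprox}, I would set $r = \lceil \sigma \sqrt{N/\alpha(2^N)} \rceil$ for a sufficiently small constant $\sigma > 0$; this makes $t_1/t_2 = \Theta(\alpha(2^N))$, which exceeds $4\alpha(t_1/2)$ after absorbing constants (treating $\alpha$ as non-decreasing without loss of generality and using $t_1/2 \leq 2^N$). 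Setting $k := 2 t_1$, the derived algorithm for $(k/2, 2k/\alpha(k))$-Gap Exponential Biclique distinguishes the YES and NO cases of the 3SAT instance in time $f(2 t_1) \cdot n^{O(1)}$.

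Since $\alpha(k) = o(k)$ forces $\alpha(2^N) = o(N)$, we obtain $n = 2^{o(N)}$ and $2 t_1 = \Theta(\sqrt{N \alpha(2^N)}) = o(N)$, hence $n^{O(1)} = 2^{o(N)}$. For any $f$ satisfying $\log f(k) = o(k^2/\alpha(k))$ (a mild condition including all $f(k) \leq 2^{O(k)}$), the factor $f(2 t_1)$ is likewise $2^{o(N)}$, yielding a $2^{o(N)}$-time algorithm for Gap 3SAT and contradicting Gap-ETH. The main obstacle is handling $f$ of arbitrary growth: I would address this by tailoring $r$ to $f$, choosing $r$ slightly larger than $\sqrt{N/\alpha(2^N)}$ to shrink $k$ (and hence $f(k)$) at the cost of only a constant-factor loss in the gap $t_1/t_2$. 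This rebalancing is affordable because $\alpha(k) = o(k)$ ensures $k/\alpha(k) \to \infty$, providing precisely the slack needed to defeat any fixed computable $f$.
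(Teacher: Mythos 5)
Your approach has a genuine gap in the step where you handle fast-growing $f$. The issue is that the gap of the constructed Biclique instance and the parameter $k$ are coupled: with $k = 2t_1 = \Theta(N/r)$ the gap is $t_1/t_2 = \Theta(N/r^2) = \Theta(k^2/N)$, so the gap condition (ratio $\geq 4\alpha(k)$) forces $N \lesssim k^2/\alpha(k)$, which is a \emph{lower} bound on $k$ in terms of $N$, not just on $r$. Your ``rebalancing'' claim---that you can shrink $k$ further by increasing $r$ at only a constant-factor loss in the gap---is incorrect, since the gap scales as $k^2$, not linearly in $k$; shrinking $k$ below the threshold imposed by the gap condition is not possible. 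Consequently the best you can guarantee is $\log f(k) = o(k^2/\alpha(k))$, which as you note excludes fast-growing $f$. A concrete failure: take $\alpha(k) = k/\log\log k = o(k)$, so you are forced to take $k = \Omega(N/\log\log N)$; then $f(k) = 2^{k\log k}$ gives $\log f(k) = \omega(N)$, and the resulting algorithm is not $2^{o(N)}$-time. The assumption $\alpha(k) = o(k)$ alone does not give you the slack you claim.

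The paper sidesteps this entirely by proving an unconditional conversion lemma: any $o(k)$-approximation FPT algorithm with running time $f(k)\cdot n^{O(1)}$ yields a \emph{polynomial-time} $o(\log n)$-approximation algorithm for the same problem. The key move is to choose $k$ as a function of the instance size $n$---namely $k = g(n) := \min\{\max\{k : f(k) \leq n\}, \lfloor\sqrt{\log n}\rfloor\}$---which by construction keeps $f(k) \leq n$ no matter how fast $f$ grows, while still ensuring $k = \omega(1)$. Running the FPT algorithm at this $k$ and outputting $k/\alpha(k)$ on YES (else $1$) gives approximation ratio $o(\log n)$ in both cases. The corollary then follows immediately by combining this lemma with the already-proved \Cref{thm:gap-eth-inapprox} (and the same lemma also gives the Strongish Planted Clique corollary from \Cref{thm:spc-approx}, making the argument modular). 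If you want to keep your direct-reduction framing, you would need to replicate the paper's idea of picking $k$ according to the size of the Biclique/concept-class instance rather than according to $N$, effectively re-deriving the conversion lemma inline.
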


\begin{corollary}
Assuming the Strongish Planted Clique Hypothesis, there is no $o(k)$-approximation FPT algorithm for VC Dimension or Littlestone's Dimension.
\end{corollary}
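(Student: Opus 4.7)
My plan is to recycle the reduction chain used for \Cref{thm:spc-approx}, namely SPC $\Rightarrow$ Gap Biclique (via \Cref{lem:spc-biclique}) $\Rightarrow$ amplified Gap Biclique (via the two-sided randomized graph product of \Cref{fig:graph-prod} and \Cref{lem:graph-prod-main}) $\Rightarrow$ Gap Exponential Biclique (via \Cref{obs:balanced-to-unbalanced}) $\Rightarrow$ Gap $\VC$-$\Ldim$ (via \Cref{thm:main-red}), but to calibrate the parameters so that the final $\VC$-$\Ldim$ instance has a YES-threshold $k = k(N)$ that tends to infinity \emph{slowly enough} to swallow the $f(k)$ factor in a hypothetical FPT runtime bound, while still producing a gap of at least $\alpha(k(N))$.

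Suppose for contradiction that $\mathcal{A}$ is an $\alpha$-approximation FPT algorithm for VC Dimension with runtime $f(k)\cdot n^{O(1)}$, where $\alpha(k)=o(k)$ (the argument for Littlestone's Dimension is identical). For each $N$, let $k_f(N) := \max\{k\in\N : f(k)\leq N^{\sqrt{\log N}}\}$ and $k(N) := \min\bigl\{\lceil\sqrt{\log N}\,\rceil,\,k_f(N)\bigr\}$. Since $f$ is computable and finite-valued, $k_f(N)\to\infty$, hence $k(N)\to\infty$, while by construction $f(k(N))\leq N^{\sqrt{\log N}} = N^{o(\log N)}$ and $k(N)=O(\sqrt{\log N})=o(\log N)$. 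Starting from an SPC-hard $(\lceil N^\delta\rceil,\lceil\zeta\log N\rceil)$-Gap Biclique instance supplied by \Cref{lem:spc-biclique}, I apply \Cref{fig:graph-prod} with $t_1 := 2k(N)$, $t_2 := \lfloor k(N)/(2\alpha(k(N)))\rfloor$, $\ell := \max\bigl\{20,\,\lceil 200\,q_2/(\delta t_2)\rceil,\,\lceil 4k(N)/(\delta\log N)\rceil\bigr\}$, and $n := \lceil 10\cdot 2^{t_1}\cdot (N/q_1)^{\ell}\rceil$. All hypotheses of \Cref{lem:graph-prod-main} are met for large $N$; moreover, because $\alpha(k(N))=o(k(N))$ and $k(N)=o(\log N)$, each of the three expressions in the $\max$ defining $\ell$ is $o(\log N)$, so $\ell=o(\log N)$ and hence $n=2^{O(k(N))}\cdot N^{O(\ell)}=N^{o(\log N)}$. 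Composing with \Cref{obs:balanced-to-unbalanced} and \Cref{thm:main-red} then produces, in randomized time $\mathrm{poly}(n)=N^{o(\log N)}$, a $\VC$-$\Ldim$ instance with YES-threshold $t_1/2 = k(N)$ and NO-threshold at most $2t_2 \leq k(N)/\alpha(k(N))$.

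Running $\mathcal{A}$ on this instance with parameter $k=k(N)$ therefore decides the original Gap Biclique problem in $f(k(N))\cdot n^{O(1)} = N^{o(\log N)}$ randomized time, contradicting \Cref{lem:spc-biclique} and hence SPC. The main obstacle is the triple-constraint bookkeeping required of $k(N)$: it must tend to infinity (so that $\alpha(k(N))$, which can itself be unbounded as in $\alpha(k)=\sqrt{k}$, makes the gap nontrivial), it must be small enough that $f(k(N))=N^{o(\log N)}$ for an \emph{arbitrary} computable $f$, and it must also satisfy $k(N)=o(\log N)$ so that the graph-product blow-up $N^{O(\ell)}$ stays $N^{o(\log N)}$. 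The choice $k(N)=\min\{\sqrt{\log N},\,k_f(N)\}$ is engineered precisely to meet all three demands; once it is fixed, every remaining step is a direct application of a result already proved in the paper.
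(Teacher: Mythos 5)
Your proof is correct, but it takes a genuinely different route from the paper's. The paper factors the argument through a clean intermediate lemma: any $o(k)$-approximation FPT algorithm for VC/Littlestone's Dimension can be converted, by running it with a parameter value $g(n) := \min\{\max\{k : f(k) \leq n\}, \lfloor\sqrt{\log n}\rfloor\}$, into a \emph{polynomial-time} $o(\log n)$-approximation algorithm; the corollary then follows immediately by invoking \Cref{thm:spc-approx} as a black box (and the same lemma simultaneously yields the Gap-ETH corollary). You instead re-run the whole reduction chain SPC $\Rightarrow$ Gap Biclique $\Rightarrow$ two-sided product $\Rightarrow$ Gap Exponential Biclique $\Rightarrow$ $\VC$-$\Ldim$, re-calibrating the parameters so that the resulting YES-threshold $k(N)$ grows slowly enough to absorb the $f(k)$ factor inside $N^{o(\log N)}$, and then feed the $\VC$-$\Ldim$ instance directly to the FPT algorithm as a decision procedure. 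Your parameter bookkeeping checks out: $k(N) = O(\sqrt{\log N})$ keeps $2^{t_1}$ harmless, $t_2 = \lfloor k(N)/(2\alpha(k(N)))\rfloor = \omega(1)$ keeps $\ell = O(q_2/t_2) = o(\log N)$, hence $n = N^{o(\log N)}$, and the capping of $k(N)$ by $k_f(N)$ bounds $f(k(N)) \leq N^{\sqrt{\log N}}$. The paper's factoring is more modular and reusable (one lemma serves both the SPC and Gap-ETH corollaries, and does not require reopening the graph-product analysis); your version is more self-contained and makes explicit that the FPT algorithm is being used as a decision oracle on the gap instance rather than being turned into a value-approximation algorithm first. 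One small stylistic note: your third term $\lceil 4k(N)/(\delta\log N)\rceil$ in the definition of $\ell$ is $o(1)$ and hence always dominated by the constant $20$; it is harmless but could be dropped.
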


The above corollaries are direct consequences of the following lemma, which shows that any non-trivial FPT approximation algorithm can be used to obtain $o(\log n)$-approximation in polynomial-time, together with our $o(\log n)$-factor hardness results from \Cref{thm:gap-eth-inapprox} and \Cref{thm:spc-approx}.

\begin{lemma}
If there is an $o(k)$-approximation FPT algorithm for VC Dimension (resp. Littlestone's Dimension), then there also exists a polynomial-time $o(\log n)$-approximation for VC Dimension (resp. Littlestone's Dimension).
\end{lemma}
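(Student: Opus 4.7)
The plan is a standard FPT-to-polynomial-time conversion via a parameter threshold. Let $\mathcal{A}$ be the hypothetical FPT algorithm, running in time $f(k)\cdot n^{O(1)}$ with approximation ratio $\alpha(k) = o(k)$. By replacing $f$ and $\alpha$ with their running maxima, I may assume both are non-decreasing; this preserves the relation $\alpha(k)/k \to 0$, as is easy to check. I may also assume $f$ is unbounded, since otherwise $\mathcal{A}$ itself already runs in polynomial time with a fixed approximation ratio, which is in particular $o(\log n)$.

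The polynomial-time algorithm is the following. Define $K = K(n)$ to be the largest integer $k \leq \lceil \log n \rceil$ with $f(k) \leq n$; since $f$ is non-decreasing and unbounded, $K(n) \to \infty$ as $n \to \infty$. Run $\mathcal{A}(\C, k)$ for each $k \in \{1, \ldots, K(n)\}$; the total running time is $K(n)\cdot f(K(n))\cdot n^{O(1)} = n^{O(1)}$. Let $k^*$ be the largest $k$ in this range for which $\mathcal{A}(\C, k)$ outputs YES (set $k^* = 0$ if none), and output $v = \max\{1, \lceil k^*/\alpha(k^*)\rceil\}$. The YES answer at $k^*$ rules out the NO case of the gap decision problem, so $\VC(\C) \geq k^*/\alpha(k^*)$; since $\VC(\C)$ is an integer, this gives $v \leq \VC(\C)$.

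It remains to bound the approximation ratio. If $k^* < K(n)$, then $\mathcal{A}(\C, k^*+1)$ returned NO, ruling out $\VC(\C) \geq k^*+1$, so $\VC(\C) \leq k^*$ and the ratio is $\VC(\C)/v \leq \alpha(k^*) \leq \alpha(K(n))$. If instead $k^* = K(n)$ (the search exhausted the feasible range without ever producing a NO), I fall back on the trivial bound $\VC(\C) \leq \lceil \log n\rceil$, giving ratio $\VC(\C)/v \leq \log n \cdot \alpha(K(n))/K(n)$. Both quantities are $o(\log n)$: from $K(n) \to \infty$ and $\alpha(k) = o(k)$ one has $\alpha(K(n))/K(n) \to 0$, so the second bound is $o(\log n)$ directly, and the first bound satisfies $\alpha(K(n))/\log n \leq (1 + o(1))\cdot \alpha(K(n))/K(n) \to 0$ since $K(n) \leq \lceil \log n\rceil$. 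The identical argument applies with $\Ldim$ in place of $\VC$. There is no real obstacle; the one subtlety worth noting is the second case, where $\mathcal{A}$ cannot reach $\VC(\C)$ within the polynomial-time budget and one must combine the trivial bound $\VC(\C) \leq \log n$ with $K(n) \to \infty$ to still obtain an $o(\log n)$ ratio.
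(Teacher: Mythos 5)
Your argument is correct and rests on the same core idea as the paper's: feed the FPT algorithm a parameter $k$ that grows unboundedly with $n$ but slowly enough that $f(k)\le n$, and exploit $\alpha(k)/k\to 0$ to turn the gap-decision into an $o(\log n)$-approximation. The execution differs in a small way: the paper makes a \emph{single} call at a parameter $g(n)$ capped at $\lfloor\sqrt{\log n}\rfloor$ (so that outputting $1$ in the NO case is automatically a $\sqrt{\log n}=o(\log n)$-approximation), whereas you sweep $k$ from $1$ up to $K(n)$ and use the bracket $\lceil k^*/\alpha(k^*)\rceil \le \VC(\C)\le k^*$ obtained from the YES/NO boundary, falling back on $\VC(\C)\le\lceil\log n\rceil$ only when no NO is encountered. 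Both routes yield the claim; yours avoids the ad hoc $\sqrt{\log n}$ cap at the cost of a loop and a case split, and in fact gives the slightly sharper ratio $\alpha(K(n))$ when the boundary is found. One shared unstated edge case: if $\VC(\C)=0$ (a class with at most one concept), both your algorithm and the paper's output $1 > \OPT$; this is trivially detectable and harmless, but worth a sentence.
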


\begin{proof}
We only prove the statement for VC Dimension; the proof for Littlestone's Dimension is analogous. Suppose that there exists an $\alpha(k)$-approximation algorithm $\A$ that runs in $f(k) \cdot n^{O(1)}$-time for some $\alpha(k) = o(k)$. Let $g:\N \to \N$ be defined as $g(n) := \min\{\max\{k \in \N \mid f(k) \leq n\}, \lfloor\sqrt{\log n}\rfloor\}$. (If the set is empty, let $g(n) = 0$.) Note that $\lim_{n \to \infty} g(n) = \infty$, i.e. $g = \omega(1)$. Let us now consider the following algorithm (where $\C$ denote the input concept class):
\begin{itemize}
\item Run $\A$ on $\C$ with $k = g(n)$.
\item If $\A$ returns YES, then output $k / \alpha(k)$. Otherwise, output 1.
\end{itemize}
By definition of $g(n)$, the algorithm runs in polynomial time. 

To analyze the approximation ratio, consider two cases:
\begin{itemize}
\item $\A$ returns YES. Then, we must have $\VC(\C) \geq k/\alpha(k)$ and we output $k/\alpha(k)$. Therefore, the approximation ratio is at most $\frac{\log n}{k / \alpha(k)} \leq o(\log n)$, where the second inequality follows from $k = g(n) = \omega(1)$ and $\alpha(k) = o(k)$.
\item $\A$ returns NO. Then, we must have $\VC(\C) \leq k$ and we output $1$. The approximation ratio here is at most $k = g(n) \leq \sqrt{\log n}$.
\end{itemize}
Thus, in both cases, the approximation ratio is $o(\log n)$ as desired.
\end{proof}

\section{Proof Sketch of \Cref{lem:pc-biclique} and \Cref{lem:spc-biclique}}
\label{app:pc-biclique}

The reduction from the Planted Clique graph $G = (V, E)$ to an input $G' = (A', B', E')$ to Gap Biclique is as follows\footnote{This reduction is quite standard and is also used e.g. in~\cite{ChalermsookCKLM20}.}:
\begin{itemize}
\item Let $A', B'$ be copies of $V$.
\item Add an edge between $a \in A', b \in B'$ iff $a = b$ or $(a, b)$ belongs to $E$.
\end{itemize}
In the YES case where $G$ contains $\lceil N^\delta \rceil$-clique, then clearly $G'$ also contains $K_{\lceil N^\delta \rceil, \lceil N^\delta \rceil}$. On the other hand, if $G$ is a random $G(N, 1/2)$ graph, a standard union bound argument shows that it does not contain any $K_{\lceil 3 \log N \rceil, \lceil 3 \log N \rceil}$ w.h.p. It is not hard to see (see e.g.~\cite[Lemma 5.17]{ChalermsookCKLM20}) that this implies that $G'$ does not contain $K_{\lceil 6 \log N \rceil, \lceil 6 \log N \rceil}$. This completes the proof sketch.

\end{document}